\documentclass[prl,twocolumn,superscriptaddress]{revtex4-2}
\usepackage{bm}

\usepackage{qcircuit}
\usepackage{graphicx}
\usepackage{dcolumn}
\usepackage{braket}
\usepackage{amsmath}
\usepackage{color}
\usepackage{amssymb}
\usepackage{dsfont}
\usepackage{amsthm}
\usepackage{amsmath}
\usepackage{dsfont}
\usepackage{adjustbox}
\usepackage{tikz}
\usetikzlibrary{decorations.markings}
\usepackage{graphicx}
\usepackage{subcaption}
\usepackage{pgfplots}
\pgfplotsset{compat=1.3}
\usepgfplotslibrary{fillbetween}
\usetikzlibrary{patterns}
\usepackage{mathtools}
\usepackage{xcolor}
\usepackage{enumerate}
\usepackage{soul}
\usepackage[many]{tcolorbox} 		
\usepackage{physics}

\usepackage[labelfont=bf,
   justification=Justified,
   format=plain]{caption}
\usepackage[hidelinks]{hyperref}
\hypersetup{
    colorlinks,
    linkcolor={red!50!black},
    citecolor={blue!50!black},
    urlcolor={blue!80!black}
}

\usepackage{pgfplots}
\usepgfplotslibrary{groupplots}
\pgfplotsset{
  compat=1.18,
  clip mode=individual 
}

\definecolor{plasmaM1}{RGB}{13, 8, 135}
\definecolor{plasmaM2}{RGB}{83, 2, 163}
\definecolor{plasmaM3}{RGB}{139, 10, 165}
\definecolor{plasmaM4}{RGB}{186, 54, 135}
\definecolor{plasmaM5}{RGB}{219, 92, 104}
\definecolor{plasmaM6}{RGB}{244, 136, 73}
\definecolor{plasmaM7}{RGB}{253, 187, 45}
\definecolor{plasmaM8}{RGB}{240, 249, 33}

\definecolor{qdriftColor}{RGB}{13, 8, 135}
\definecolor{orderMColor}{RGB}{186, 54, 135}

\newcommand{\drawMregions}{%
  \fill[plasmaM1!20] (axis cs:0.0400, 1e-7) rectangle (axis cs:0.1857, 1.0);
  \fill[plasmaM2!20] (axis cs:0.1857, 1e-7) rectangle (axis cs:0.3369, 1.0);
  \fill[plasmaM3!20] (axis cs:0.3369, 1e-7) rectangle (axis cs:0.4877, 1.0);
  \fill[plasmaM4!20] (axis cs:0.4877, 1e-7) rectangle (axis cs:0.6370, 1.0);
  \fill[plasmaM5!20] (axis cs:0.6370, 1e-7) rectangle (axis cs:0.7850, 1.0);
  \fill[plasmaM6!20] (axis cs:0.7850, 1e-7) rectangle (axis cs:0.91, 1.0);
}

\newtheorem{lemma}{Lemma}
\newtheorem{theorem}{Theorem}
\newtheorem{proposition}{Proposition}
\newtheorem{corollary}{Corollary}[theorem]

\definecolor{color3}{RGB}{94, 201, 98}
\definecolor{color2}{RGB}{33, 145, 140}
\definecolor{color1}{RGB}{59, 82, 139}

\definecolor{plasma_color3}{RGB}{248, 149, 64}
\definecolor{plasma_color2}{RGB}{204, 71, 120}
\definecolor{plasma_color1}{RGB}{126, 3, 168}

\definecolor{colore}{rgb}{0.9, 0.9, 0.9}
\newtcolorbox{boxA}{
    colback = colore, 
    boxrule = 0pt  
}

\begin{document}

\title{Pathwise Random Hamiltonian Simulation}

\author{Davide Cugini}
\affiliation{Dipartimento di Fisica, Universit\`a di Pavia, via Bassi 6, 27100,  Pavia, Italy}

\begin{abstract}
Randomized product formulas such as qDrift offer a resource-efficient alternative to deterministic Trotter--Suzuki decompositions for Hamiltonian simulation, removing their polynomial dependence on the number of Hamiltonian terms. qDrift, however, is intrinsically limited to first order in the evolution time, so its query complexity remains linear in the inverse of the target accuracy, $1/\epsilon$. We introduce Pathwise Random Hamiltonian Simulation (PRHS), which extends qDrift to arbitrary order by subdividing each time step into $M$ correlated slices, each evolving under a term sampled from a quasi-probability distribution that we construct in closed form and prove unique, with a bias decaying factorially in $M$. Optimizing jointly over the number of slices $M$ and the number of independent blocks $N$ interpolates between the standard qDrift protocol at long times and a high-precision regime where the query cost grows slower than any power of $1/\epsilon$, without requiring ancillary qubits. Numerical simulations of five molecular Hamiltonians confirm this advantage, with PRHS achieving accuracies two to four orders of magnitude beyond qDrift at equal query cost.
\end{abstract}
\maketitle
\textit{Introduction}.---
Hamiltonian simulation is one of the central primitives of quantum computing. In its simplest form, it is the task of approximating, to within a target accuracy $\epsilon$, the time-evolution operator $U(t) = e^{-iHt}$ generated by a Hamiltonian $H$. Beyond its natural relevance for simulating quantum systems~\cite{piccinelli2025sqdrift}, Hamiltonian simulation constitutes the algorithmic backbone of numerous quantum algorithms, including phase estimation~\cite{kitaev1995quantum}, optimization~\cite{farhi2014quantum}, and linear-system solvers~\cite{harrow2009quantum}.

A widely used approach to this problem relies on \emph{operator-splitting} techniques, in which the evolution generated by a complicated Hamiltonian is approximated through compositions of evolutions generated by simpler Hamiltonian terms. More explicitly, given a Hamiltonian of the form
\begin{align}\label{eq:H_decomposition}
    H = \sum_{\gamma=1}^\Gamma \lambda_\gamma H_\gamma\,, \qquad \norm{H_\gamma} = 1\; \forall \gamma\,,
\end{align}
where $\Gamma$ is the number of elementary terms and $\Lambda := \sum_{\gamma} \abs{\lambda_\gamma}$ their total weight, these methods approximate $e^{-iHt}$ with products of the form $\prod_{l=1}^L e^{-it_l H_{\gamma_l}}$, for some sequence of times $t_l$ and Hamiltonian-term indices $\gamma_l$, a family that includes the well-known Trotter--Suzuki formulas~\cite{trotter1959product,suzuki1990fractal}. Such products reproduce $e^{-iHt}$ exactly only when the terms $H_\gamma$ mutually commute; in the generic case $[H_\gamma,H_{\gamma'}]\neq0$, they differ from the true evolution by an error controlled by nested commutators of the $H_\gamma$'s, and it is precisely this non-commutativity that operator-splitting methods must be designed to suppress. Such methods require no ancillary qubits or block encodings, and are conceptually simple and easy to implement.

Within this family, the $p$-th order Trotter--Suzuki formula achieves a query complexity scaling as $\mathcal{O}(5^{p}\Gamma^{2+1/p}t^{1+1/p}\epsilon^{-1/p}\max\abs{\lambda_\gamma}^{1+1/p})$~\cite{childs2021theory}, so that, for any fixed order $p$, the dependence on $1/\epsilon$ remains polynomial. Tuning $p$ as a function of both $\epsilon$ and $t$ allows product formulas to reach a quasi-optimal scaling in both quantities. In practice, however, this query complexity retains a polynomial dependence on $\Gamma$, a factor that becomes substantial for Hamiltonians with many non-commuting terms.

This observation motivated Campbell's proposal of the qDrift protocol~\cite{campbell2018random}: in this randomized version of product formulas, the evolution time is divided into $N$ equal time steps, but each step evolves under a single term $H_\gamma$, randomly sampled from a classical distribution $\mu_\gamma$. This simple procedure achieves the same asymptotic scaling in $t$ and $\epsilon$ as the first-order Trotter--Suzuki formula, while eliminating the explicit dependence on $\Gamma$.
qDrift, however, is intrinsically limited to first order in $t$, so its query complexity remains linear in $1/\epsilon$, just like fixed-order Trotter--Suzuki formulas. 
Several works have proposed extensions to higher order, each at the cost of one of qDrift's defining features: qSWIFT~\cite{nakaji2024qswift} reaches arbitrary order by importance-sampling a signed quasi-probability distribution over higher-order terms, but its estimation procedure requires an ancilla qubit, forfeiting qDrift's ancilla-free simplicity; composite qDrift-product formulas~\cite{pocrnic2024composite} remain ancilla-free by instead partitioning the Hamiltonian into a stochastically sampled part and a deterministically Trotterized part, but thereby reintroduce an explicit dependence on the number of terms $\Gamma$ through the Trotterized component; and, most recently, qSHIFT~\cite{lee2026qshift} adaptively updates the sampling distribution over blocks of $r$ terms to reach order $r$, at the cost of solving a system of $\Gamma^r$ classical equations at every round, an overhead that grows rapidly with the target order. An arbitrary-order generalization of qDrift that simultaneously preserves its ancilla-free simplicity, its independence from $\Gamma$, and an efficiently computable sampling rule has therefore remained an open problem. In this Letter we overcome this limitation.\\
\indent
We show that qDrift can be systematically extended to arbitrary order $M$ in $\Lambda t$ by replacing its independent, single-term sampling rule with a suitably correlated -- and, in general, signed -- quasi-probability distribution over sequences of $M$ Hamiltonian terms, a construction we call Pathwise Random Hamiltonian Simulation (PRHS). We prove that this distribution is unique, derive its closed-form expression, and establish a rigorous bias bound that decays factorially in $M$. We further show that, once the evolution time and target accuracy $\epsilon$ are fixed, the optimal choice of block size $M$ and number of repetitions $N$ interpolates between the standard qDrift protocol, recovered in the long-time limit, and a high-precision regime in which the query cost grows slower than any power of $1/\epsilon$, all while retaining the ancilla-free simplicity of randomized product formulas. We validate these predictions numerically on the dynamics of five molecular Hamiltonians, where PRHS achieves accuracies two to four orders of magnitude beyond qDrift at equal query cost per run.\\
\indent
We note, however, that our analysis is limited in scope, in that we only compare against other Trotter--Suzuki-type decompositions; approaches outside this family, most notably the quantum singular value transformation~\cite{gilyen2019qsvt,low2019qubitization}, achieve a better asymptotic scaling in both the evolution time and the target accuracy $\epsilon$, at the price of ancillary qubits and a more involved implementation.

\textit{Main results}.---
We work in the Liouvillian formulation, where
\begin{align}
    \mathcal{L}\rho := -i\left[H, \rho\right]\,,\qquad
     \mathcal{L}_\gamma \rho := -i\left[H_\gamma, \rho\right]
\end{align}
denote, respectively, the action of Liouvillian superoperators generated by $H$ and by each term $H_\gamma$ in the decomposition $H = \sum_{\gamma=1}^\Gamma \lambda_\gamma H_\gamma$ on a general density state $\rho$.
The qDrift protocol approximates the exact evolution $e^{t \mathcal{L}}$ to first order in $\Lambda t$ by sampling $\gamma$ from the discrete probability distribution
\begin{align}\label{eq:qdrift_distribution}
    \mu_\gamma = \frac{\abs{\lambda_\gamma}}{\Lambda}\,, \qquad \Lambda := \sum_{\gamma} \abs{\lambda_{\gamma}}\,,
\end{align}
and applying the corresponding evolution rule $e^{\Lambda t\,\mathcal{L}_\gamma}$.
The resulting quantum channel $\mathcal{C}_\mathrm{qDrift}(t) = \sum_\gamma \mu_\gamma\, e^{\Lambda t \mathcal{L}_\gamma}$ is a biased approximation of the true evolution, with a diamond-distance error bounded by $\delta \leq 2\Lambda^2t^2 e^{2\Lambda t} = \mathcal{O}((\Lambda t)^2)$ in the small $\Lambda t$ regime.

This work extends this construction to arbitrary order in $\Lambda t$. By subdividing each time step $t$ into $M$ \textit{time slices}, each associated with a randomly sampled evolution rule, and by carefully correlating the rules assigned to the $M$ slices within each step, we construct a channel
\begin{align}
    \mathcal{C}_M(t) := \sum_{\alpha \in \{1,...,\Gamma\}^M} q_M(\alpha) \,e^{\Lambda t \mathcal{L}_{\alpha_M}/M}\cdots e^{\Lambda t \mathcal{L}_{\alpha_1}/M}
\end{align}
that approximates $e^{t\mathcal{L}}$ to order $M$ in $\Lambda t$. We refer to this construction as pathwise random hamiltonian simulation.
Here, $\alpha=(\alpha_1,\ldots,\alpha_M)$ labels the trajectory of sampled indices, with $\alpha_m$ specifying the Hamiltonian term $H_{\alpha_m}$ applied in the $m$-th time slice.
In Appendix~A of the supplementary material~\cite{SM} we prove that the distribution $q_M(\alpha)$ achieving this scaling is unique, and derive its closed-form expression. 
In order to report such a result,
it is fundamental to observe that any index vector $\alpha=(\alpha_1,\ldots,\alpha_M)\in\{1,\ldots,\Gamma\}^M$ can be uniquely decomposed into $L\le M$ maximal \emph{runs} of consecutive equal entries, written $\alpha=(g_1^{c_1},\ldots,g_L^{c_L})$: here $g_l\in\{1,\ldots,\Gamma\}$ is the value taken by the $l$-th run, $c_l\ge1$ is the number of times $g_l$ occurs \emph{consecutively} (so $g_l^{c_l}$ denotes $c_l$ repetitions of $g_l$ in the string, not an algebraic power), consecutive runs have distinct values ($g_l\neq g_{l+1}$), and $\sum_{l=1}^L c_l=M$. Equality $L=M$ holds only when no two consecutive entries of $\alpha$ coincide; in general $L<M$ whenever some value repeats in a row. For instance, $\alpha=(1,1,2,3,3,3)$ has $M=6$ and decomposes into $L=3$ runs, $(g_1,c_1)=(1,2)$, $(g_2,c_2)=(2,1)$, $(g_3,c_3)=(3,3)$. Writing each index vector $\alpha\in\{1,\ldots,\Gamma\}^M$ in terms of the triple $(g,c,L)$ bijectively defined this way, the distribution reads
\begin{align}
    q_M(\alpha) \;=\;\sum_{b_1,\ldots,b_L\ge1}\frac{M^{\,n}}{n!}
    \prod_{l=1}^{L}\mu_{g_l}^{\,b_l}\,\frac{b_l!\;s(c_l,b_l)}{c_l!}\;,
    \label{eq:explicit}
\end{align}
where $n:=\sum_{l=1}^L b_l$ and $s(c,b)$ are the \emph{signed Stirling numbers} of the first kind. The distribution in Eq.~\eqref{eq:explicit} is the only quasi-probability distribution achieving a bias bounded by
\begin{align}
    \delta_M \leq \frac{(2\Lambda t)^{M+1}}{(M+1)!}\, e^{2\Lambda t}\,,
\end{align}
and it reduces to the usual qDrift sampling for $M=1$.
As in standard qDrift, an evolution over a total time $T=Nt$ is obtained by composing $N$ independent realizations of PRHS, so that the resulting channel is $\mathcal{C}_M(t)^N$, with total error bounded by
\begin{align}\label{eq:accuracy}
\delta(N,M) \leq N\frac{(2\Lambda T/N)^{M+1}}{(M+1)!}\,e^{2\Lambda T/N}\,.
\end{align}

At first sight, the increased precision afforded by PRHS comes at the cost of two main drawbacks relative to standard qDrift. The first is that each time step now requires $M$ queries to the elementary evolution channels, so the per-run query cost of $N$ independent applications of PRHS is $\mathrm{C}_\mathrm{run} := NM$.
The second is that, since $q_M$ is a quasi-probability distribution, i.e., $\sum_\alpha q_M(\alpha)=1$ while $q_M(\alpha)$ can take negative values, reproducing the target channel requires importance sampling: $\alpha$ is drawn from the normalized distribution $q_M(\alpha)/\norm{q_M}_1$, with $\norm{q_M}_1:=\sum_\alpha\abs{q_M(\alpha)}$, and the outcome of each of the $N$ independent applications is rescaled by $\norm{q_M}_1$. As a result, the variance of the estimator, and hence the number of shots needed to reach a target accuracy, is amplified by a factor $\norm{q_M}_1^{2N}$: the reduction in per-run cost is offset by an increase in the number of shots.
The magnitude of $\norm{q_M}_1$ depends strongly on the coefficients $\{\mu_\gamma\}_{\gamma=1}^\Gamma$.
In Appendix~B of~\cite{SM},
we prove the model-independent bound $\norm{q_M}_1\le\binom{2M-1}{M}$. The numerical results reported below further suggest that $\norm{q_M}_1$ grows exponentially with $M$, as $\norm{q_M}_1\sim\eta^{(M-1)/2}$ for a problem-specific rate $\eta$, consistent with the bound above, which restricts $\eta\in[1,16]$.
Remarkably, $\lVert q_M\rVert_1$ can be computed classically in advance,
before the quantum simulation is run.

However, a deeper investigation reveals that these shortcomings are less severe than expected.
Once the computational problem is defined, $\Lambda$, $T$, and the target accuracy $\epsilon$ are fixed.
At this stage, an optimal choice of the integer pair $(N,M)$ is determined by two conditions.
First, the accuracy $\delta(N,M)$ in Eq.~\eqref{eq:accuracy} must satisfy $\delta \leq \epsilon$.
Second, the remaining flexibility can be used to minimize the total computational cost
\begin{align}\label{eq:cost}
    \mathrm{C}(N,M) := (NM) \times \eta^{N(M-1)},
\end{align}
where the first factor on the r.h.s. is the query cost per run $\mathrm{C}_\mathrm{run} = NM$, while the second accounts for the amplification in the number of shots.
We provide a detailed description of how such an optimization can be efficiently performed numerically in Appendix~C of~\cite{SM}.
Here, we only report the optimal choices in two specific limits.
The first is the long-time limit, $\Lambda T \to \infty$ at fixed $\epsilon$.
In this case, the optimal choice reduces to $M^* = 1$ and $N^* \sim 2\Lambda^2T^2/\epsilon$, which coincides with the standard qDrift protocol.
The second is the high-accuracy limit, $\epsilon \to 0$ at fixed $\Lambda T$.
In this case, the optimal choice goes in exactly the opposite direction, i.e., $N^* = 1$ and $M^* \sim \ln(1/\epsilon)/\ln(\ln(1/\epsilon))$, meaning that all the time slices of the computation are mutually correlated.
The optimal cost achieved in these two limits can be directly computed from Eq.~\eqref{eq:cost}.
In particular, in the long-time limit there is no need for shot-number amplification, and the cost is $\mathrm{C}(N^*,M^*) = N^* = \mathcal{O}(\Lambda^2T^2/\epsilon)$, which is quadratically suboptimal, though still polynomial.
In the high-precision setting, instead, the optimal cost per run is
\begin{align}
    \mathrm{C}_\mathrm{run} = M^* \sim \frac{\ln(1/\epsilon)}{\ln\ln(1/\epsilon)}\,,
\end{align}
while the shots amplification is $\eta^{M^*-1}$, resulting in a total cost scaling as
\begin{align}
    \mathrm{C}(N^*,M^*) \sim \left(\frac{1}{\epsilon}\right)^{\frac{\ln\eta}{\ln\ln(1/\epsilon)}}\,,
\end{align}
which is slower than any power of $1/\epsilon$, and hence PRHS outperforms qDrift.
\begin{figure*}[t!]
  \centering
  \definecolor{molH2}{RGB}{31,119,180}
  \definecolor{molHeH}{RGB}{255,127,14}
  \definecolor{molLiH}{RGB}{44,160,44}
  \definecolor{molBeH2}{RGB}{214,39,40}
  \definecolor{molH2O}{RGB}{148,103,189}
  \begin{tikzpicture}
    \begin{groupplot}[
      group style={
        group size=3 by 2,
        horizontal sep=1.8cm,
        vertical sep=2.5cm,
      },
      width=0.31\textwidth,
      height=0.24\textwidth,
      xmin=0.04,
      xmax=0.91,
      xlabel={$\Lambda T$},
      xlabel style={yshift=10pt},
      xtick pos=bottom,
      ytick pos=left,
      tick align=outside,
      title style={yshift=15pt},
      grid=both,
      grid style={line width=.1pt, draw=gray!25},
      major grid style={line width=.2pt, draw=gray!40},
      tick label style={font=\footnotesize},
      label style={font=\small},
      legend style={font=\footnotesize},
      extra x ticks={0.118, 0.261, 0.412, 0.562, 0.711, 0.858},
      extra x tick labels={$1$, $2$, $3$, $4$, $5$, $6$},
      extra x tick style={
        tick pos=top,
        xticklabel pos=top,
        tick align=outside,
        major tick length=3pt,
        tick style={draw=black, line width=0.5pt},
        grid=none,
        tick label style={font=\tiny, yshift=1pt}
      }
    ]
    \nextgroupplot[title={\small \textbf{H}$_2$ (2-qubit)}, ymode=log, ymin=1e-7, ymax=1.0, ylabel={Trace Distance Error}]
      \drawMregions
      \node[font=\small] at (axis description cs:0.5, 1.22) {$\mathrm{C}_\mathrm{run}$};
      \draw[black, dashed] (axis cs:0.05, 1e-1) -- (axis cs:0.9, 1e-1);
      \addplot[color=qdriftColor, mark=o, mark size=1.0pt, thick] table [x=s, y=qdrift_error, col sep=comma] {data_H2.csv};
      \addplot[color=orderMColor, mark=square*, mark size=1.0pt, thick] table [x=s, y=order_m_error, col sep=comma] {data_H2.csv};
    \nextgroupplot[title={\small \textbf{HeH}$^+$ (2-qubit)}, ymode=log, ymin=1e-7, ymax=1.0]
      \drawMregions
      \node[font=\small] at (axis description cs:0.5, 1.22) {$\mathrm{C}_\mathrm{run}$};
      \draw[black, dashed] (axis cs:0.05, 1e-1) -- (axis cs:0.9, 1e-1);
      \addplot[color=qdriftColor, mark=o, mark size=1.0pt, thick] table [x=s, y=qdrift_error, col sep=comma] {data_HeH+.csv};
      \addplot[color=orderMColor, mark=square*, mark size=1.0pt, thick] table [x=s, y=order_m_error, col sep=comma] {data_HeH+.csv};
    \nextgroupplot[title={\small \textbf{LiH} (4-qubit)}, ymode=log, ymin=1e-7, ymax=1.0]
      \drawMregions
      \node[font=\small] at (axis description cs:0.5, 1.22) {$\mathrm{C}_\mathrm{run}$};
      \draw[black, dashed] (axis cs:0.05, 1e-1) -- (axis cs:0.9, 1e-1);
      \addplot[color=qdriftColor, mark=o, mark size=1.0pt, thick] table [x=s, y=qdrift_error, col sep=comma] {data_LiH.csv};
      \addplot[color=orderMColor, mark=square*, mark size=1.0pt, thick] table [x=s, y=order_m_error, col sep=comma] {data_LiH.csv};
    \nextgroupplot[title={\small \textbf{BeH}$_2$ (4-qubit)}, ymode=log, ymin=1e-7, ymax=1.0, ylabel={Trace Distance Error}]
      \drawMregions
      \node[font=\small] at (axis description cs:0.5, 1.22) {$\mathrm{C}_\mathrm{run}$};
      \draw[black, dashed] (axis cs:0.05, 1e-1) -- (axis cs:0.9, 1e-1);
      \addplot[color=qdriftColor, mark=o, mark size=1.0pt, thick] table [x=s, y=qdrift_error, col sep=comma] {data_BeH2.csv};
      \addplot[color=orderMColor, mark=square*, mark size=1.0pt, thick] table [x=s, y=order_m_error, col sep=comma] {data_BeH2.csv};
    \nextgroupplot[
      title={\small \textbf{H}$_2$\textbf{O} (4-qubit)},
      ymode=log,
      ymin=1e-7,
      ymax=1.0,
      legend to name=sharedlegend_M,
      legend columns=2
    ]
      \drawMregions
      \node[font=\small] at (axis description cs:0.5, 1.22) {$\mathrm{C}_\mathrm{run}$};
      \draw[black, dashed] (axis cs:0.05, 1e-1) -- (axis cs:0.9, 1e-1);
      \addplot[color=qdriftColor, mark=o, mark size=1.0pt, thick] table [x=s, y=qdrift_error, col sep=comma] {data_H2O.csv};
      \addlegendentry{qDrift}
      \addplot[color=orderMColor, mark=square*, mark size=1.0pt, thick] table [x=s, y=order_m_error, col sep=comma] {data_H2O.csv};
      \addlegendentry{PRHS}
    \nextgroupplot[
      title={\small Sampling Overhead},
      ymode=log,
      ylabel={$\norm{q_M}_1^2$},
      legend to name=sharedlegend_norm,
      legend columns=3
    ]
      \node[font=\small] at (axis description cs:0.5, 1.2) {$\mathrm{C}_\mathrm{run}$};
    \addplot[draw={rgb,255:red,42;green,20;blue,77}, mark=*, mark size=1.0pt, thick]
        table [x=s, y expr={(\thisrow{norm_q1})^2}, col sep=comma] {data_H2.csv};
    \addlegendentry{$\text{H}_2$}
    
    \addplot[draw={rgb,255:red,99;green,29;blue,110}, mark=square*, mark size=1.0pt, thick]
        table [x=s, y expr={(\thisrow{norm_q1})^2}, col sep=comma] {data_HeH+.csv};
    \addlegendentry{$\text{HeH}^+$}
    
    \addplot[draw={rgb,255:red,159;green,50;blue,99}, mark=triangle*, mark size=1.0pt, thick]
        table [x=s, y expr={(\thisrow{norm_q1})^2}, col sep=comma] {data_LiH.csv};
    \addlegendentry{$\text{LiH}$}
    
    \addplot[draw={rgb,255:red,212;green,72;blue,66}, mark=diamond*, mark size=1.0pt, thick]
        table [x=s, y expr={(\thisrow{norm_q1})^2}, col sep=comma] {data_BeH2.csv};
    \addlegendentry{$\text{BeH}_2$}
    
    \addplot[draw={rgb,255:red,238;green,120;blue,67}, mark=star, mark size=1.2pt, thick]
        table [x=s, y expr={(\thisrow{norm_q1})^2}, col sep=comma] {data_H2O.csv};
    \addlegendentry{$\text{H}_2\text{O}$}
    
    \addplot[black, dashed, thick, no markers]
        table [x=s, y expr={(\thisrow{bound_q1})^2}, col sep=comma] {data_bound.csv};
    \addlegendentry{Bound $\binom{2M-1}{M}^2$}
    \end{groupplot}
  \end{tikzpicture}
  \vspace{3mm}
  \noindent
  \begin{minipage}[c]{0.48\textwidth}
    \centering
    \ref{sharedlegend_M}
  \end{minipage}%
  \hfill
  \begin{minipage}[c]{0.48\textwidth}
    \centering
    \ref{sharedlegend_norm}
  \end{minipage}
  \caption{Comparison between qDrift and PRHS ($N=1$) for the five
molecular Hamiltonians introduced in the text. The first 5 panels show the trace-distance error
achieved by qDrift (blue) and by PRHS (purple) as a function of
$\Lambda T$, at matched query cost $\mathrm{C}_\mathrm{run}$; the black dashed horizontal
line marks the target accuracy $\epsilon=10^{-1}$ used to select, for each $\Lambda T$, the smallest
$M$ meeting this target. Shaded background regions, and the labels along the upper
horizontal axis, indicate the resulting order $M$, which, following from the
bias bound of Eq.~\eqref{eq:accuracy} evaluated at $N=1$, depends on $\Lambda T$ alone
and is therefore common to all five molecules. 
The last panel reports the corresponding sampling
overhead $\lVert q_M \rVert_1^2$, computed exactly from Eq.~\eqref{eq:explicit} for each
molecule (colored curves, molecule-specific through $\mu_\gamma=\abs{\lambda_\gamma}/\Lambda$),
together with the universal, molecule-independent bound $\binom{2M-1}{M}^2$ of Appendix~B~\cite{SM} (black dashed) that every curve must lie below; the same
top-axis $M$ labeling applies.
Although $\norm{q_M}_1^2$ grows exponentially with
$M$, the resulting increase in sampling cost remains modest compared to the magnitude gain in accuracy.}
\label{fig:molecular_benchmark_M_regions}
\end{figure*}

\textit{Numerical results}.---
Since PRHS and qDrift coincide exactly in the long-time limit $\Lambda T \to \infty$ (where $M^*=1$), a meaningful numerical comparison must instead target the regime in which correlating multiple slices, $M>1$, becomes necessary to reach a fixed accuracy -- the same mechanism underlying the high-accuracy limit $\epsilon\to0$ derived above. We therefore fix $\epsilon = 10^{-1}$ and scan $\Lambda T \in [0,0.9]$, comparing the $N=1$ instance of PRHS against qDrift, allowing $M$ to grow with $\Lambda T$ as needed to reach the target accuracy. For each value of $\Lambda T$, we take the smallest $M$ (and hence $\mathrm{C}_\mathrm{run}=M$) meeting the target accuracy, and run qDrift with a matching number of steps equal to this same $\mathrm{C}_\mathrm{run}$, so that PRHS and qDrift are compared at equal query cost.
We evaluate this benchmark on five molecular Hamiltonians built in the STO-3G basis~\cite{hehre1969selfconsistent}: $\mathrm{H_2}$ and $\mathrm{HeH^+}$ are retained at the full STO-3G level, while $\mathrm{LiH}$, $\mathrm{BeH_2}$, and $\mathrm{H_2O}$ are reduced to a complete active space~\cite{roos1980complete} of two electrons in two spatial orbitals [$\mathrm{CAS}(2,2)$] under the frozen-core approximation. Under the Jordan--Wigner transformation~\cite{jordan1928paulische}, these representations map onto $2$, $2$, $4$, $4$, and $4$ qubits, respectively. For each system, time evolution starts from the single-determinant Hartree-Fock (HF) reference state, the standard mean-field baseline from which real-time electron-correlation dynamics develop, corresponding to the computational-basis states $\ket{00}$, $\ket{01}$, $\ket{0011}$, $\ket{1100}$, and $\ket{1100}$, respectively.
Panels 1--5 of Fig.~\ref{fig:molecular_benchmark_M_regions} show that, in all five cases, both PRHS and qDrift remain below the target accuracy $\epsilon$ over the whole range. For small $\Lambda T$, the minimal query cost already saturates at $M=1$, where PRHS coincides exactly with qDrift; as $\Lambda T$ grows and larger $M$ becomes necessary, the error achieved by PRHS falls two to four orders of magnitude below that of qDrift at matched query cost.
The last panel of Fig.~\ref{fig:molecular_benchmark_M_regions} complements this comparison by reporting $\norm{q_M}_1$ over the same range of $\Lambda T$ for all five systems: although $\norm{q_M}_1$ grows exponentially with $M$, as established above, this growth remains small compared to the corresponding gain in precision, and lies safely below the universal bound $\binom{2M-1}{M}$.

\textit{Conclusion}.---
We have introduced pathwise random hamiltonian simulation, a systematic extension of the qDrift protocol to arbitrary order $M$ in the evolution time, obtained by replacing qDrift's independent, single-term sampling rule with a correlated quasi-probability distribution over sequences of $M$ Hamiltonian terms. We constructed this distribution in closed form, proved it unique, and established a bias bound that decays factorially in $M$.

Being based on a signed quasi-probability distribution, however, the method carries an intrinsic drawback shared by all such randomized schemes: reconstructing the target channel requires importance sampling, and rescaling measurement outcomes amplifies the number of shots needed to reach a target accuracy. We showed, nonetheless, that this cost can be kept under control: jointly optimizing over the block size $M$ and the number of independent repetitions $N$ allows PRHS to interpolate between the standard qDrift protocol in the long-time regime and a high-precision regime in which the total query cost grows slower than any power of $1/\epsilon$, despite the exponential growth of the shots-number amplification with $M$. Numerical simulations on the dynamics of five molecular Hamiltonians confirm the practical value of this high-precision regime: at equal query cost, PRHS reaches accuracies two to four orders of magnitude beyond qDrift.

As a final remark,
the $(N,M)$ optimization presented here targets the abstract query cost $\mathrm{C}_\mathrm{run}=NM$, which counts only the number of elementary evolution channels $e^{\Lambda t \mathcal{L}_\gamma/M}$ applied, treating every $\gamma$ as equally expensive. On real hardware, however, the implementation cost of $e^{\Lambda t \mathcal{L}_\gamma/M}$ generically varies across terms and depends on the target architecture. The resource-optimal importance-sampling framework of Ref.~\cite{cugini2026resource} is built precisely for this setting: given the distribution of implementation costs of $e^{\Lambda t \mathcal{L}_\gamma}$ across $\gamma$ on a specific architecture, it jointly minimizes a net-cost figure of merit combining these architecture-dependent costs with the resulting estimator variance, reweighting the sampling distribution over $\gamma$ away from the naive choice $\propto\abs{q_M(\alpha)}$ used here, while provably preserving the estimator's bias. Combining this architecture-aware reweighting with the $(N,M)$ optimization of Eq.~\eqref{eq:cost} is a natural next step towards minimizing the true, hardware-level computational cost of PRHS, and, more broadly, of randomized Hamiltonian-simulation protocols beyond qDrift.

\textit{Acknowledgments}.---
The author thanks Luca Spagnoli for stimulating discussions and helpful feedback and Sara Molteni for her constant support throughout this work.

\bibliography{main.bib}

\widetext

\appendix

\clearpage
\begin{center}
\textbf{\large Supplementary Material}\\
\end{center}
\setcounter{equation}{0}
\setcounter{figure}{0}
\setcounter{table}{0}
\setcounter{page}{1}
\makeatletter
\renewcommand{\theequation}{S\arabic{equation}}
\renewcommand{\thefigure}{S\arabic{figure}}

In this Supplementary Material we provide all the detailed proofs of the statements of the main text.

\section{A --- Existence and uniqueness of the quasi-probability distribution}\label{app:existence_uniqueness}
Throughout this appendix, the elementary time step $\Lambda t/M$ has been absorbed into the definition of $\mathcal{L}_1,\ldots,\mathcal{L}_\Gamma$, i.e.\ these are understood to be expressed in units of $(\Lambda t/M)^{-1}$ -- equivalently, $\mathcal{L}_\gamma\equiv(\Lambda t/M)\,\mathcal{L}_\gamma^{\rm phys}$ for the physical (dimensionful) generators, so that $e^{M\mathcal{L}}$ below reproduces the physical evolution over the full time $t$.

Let $\Omega:=\{1,\ldots,\Gamma\}$ index a set of Liouvillians $\mathcal{L}_1,\ldots,\mathcal{L}_\Gamma$, weighted by $\mu_\gamma:= \abs{\lambda_\gamma}/\Lambda>0$ so that $\sum_{\gamma\in\Omega}\mu_\gamma=1$, and $\mathcal{L}:=\sum_{\gamma\in\Omega}\mu_\gamma\mathcal{L}_\gamma$. A string $\alpha=(\alpha_1,\ldots,\alpha_M)\in\Omega^M$ realizes the operator
\begin{align}
    E_\alpha:=e^{\mathcal{L}_{\alpha_M}}\cdots e^{\mathcal{L}_{\alpha_1}}
    \label{eq:E_alpha}
\end{align}
(the channel labeled $\alpha_1$ is applied first, hence sits rightmost).

Let $\mathds{A}$ be the associative, in general noncommutative, algebra generated by $\mathcal{L}_1,\ldots,\mathcal{L}_\Gamma$, completed so as to contain formal series such as $e^{\mathcal{L}_\gamma}$ and let $\mathds{A}_{\ge n}\subset\mathds{A}$ be the subspace spanned by monomials $\mathcal{L}_{\delta_1}\cdots\mathcal{L}_{\delta_k}$ of degree $k\ge n$, so that $\mathds{A}_{\ge a}\cdot\mathds{A}_{\ge b}\subseteq\mathds{A}_{\ge a+b}$.

A quasi-probability $q:\Omega^M\to\mathbb{R}$ realizes an \emph{order-$M$} sampling scheme if
\begin{align}
    \sum_{\alpha\in\Omega^M} q(\alpha)\,E_\alpha\ \equiv\ e^{M\mathcal{L}}
    \pmod{\mathds{A}_{\ge M+1}}\,,
    \label{eq:kth_order_constraint}
\end{align}
i.e.\ the sampled channel reproduces $e^{M\mathcal{L}}$ to every order $\le M$ in the generators. Below we establish both halves of this statement: that a solution $q$ exists for every $M$ and $\Gamma$, given in closed form by Theorem~\ref{thm:closed_form}, and that this solution is the \emph{only} one, proved separately in Theorem~\ref{prop:uniqueness}.

For a string $\alpha=(\alpha_1,\ldots,\alpha_M)\in\Omega^M$ we write its \emph{run decomposition} as $(g_1,c_1),\ldots,(g_L,c_L)$, where $g_l\ne g_{l+1}$ are the run values and $c_l\ge1$ the run lengths, $\sum_l c_l=M$.

\begin{theorem}[Closed form for $q$]
\label{thm:closed_form}
The quasi-probability $q$ solving the order-$M$ condition \eqref{eq:kth_order_constraint} is
\begin{align}
    \boxed{\;q(\alpha)\;=\;\sum_{b_1,\ldots,b_L\ge1}\frac{M^{\,n}}{n!}
    \prod_{l=1}^{L}\mu_{g_l}^{\,b_l}\,\frac{b_l!\;s(c_l,b_l)}{c_l!}\;,
    \qquad n:=\sum_{l=1}^L b_l\;,}
\end{align}
where $s(c,b)$ are the \emph{signed} Stirling numbers of the first kind, defined by $\sum_b s(c,b)x^b=x(x-1)\cdots(x-c+1)$. Equivalently, with
\begin{align}
    a_c(\mu,z)\;:=\;\frac{1}{c!}\sum_{b=1}^{c}s(c,b)\,b!\,\mu^{b}\,z^{\,c-b}
    \qquad\text{and}\qquad
    \langle z^{D}\rangle_M:=\frac{M^{\underline{D}}}{M^{D}}\,,\ \ M^{\underline{D}}:=M(M-1)\cdots(M-D+1)\,,
    \label{eq:a_poly}
\end{align}
\begin{align}
    q(\alpha)\;=\;\frac{M^M}{M!}\,\Big\langle\ \prod_{l=1}^{L}a_{c_l}\!\left(\mu_{g_l},z\right)\Big\rangle_M \,.
    \label{eq:pairing}
\end{align}
\end{theorem}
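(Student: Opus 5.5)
The plan is to verify directly that the boxed $q$ satisfies the order-$M$ condition \eqref{eq:kth_order_constraint}. Uniqueness is left to Theorem~\ref{prop:uniqueness}, so only existence and the explicit form need checking.

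\textbf{Step 1: change of variables.} Set $Y_\gamma:=e^{\mathcal{L}_\gamma}-1$, so that $\mathcal{L}_\gamma=\ln(1+Y_\gamma)$. Since $Y_\gamma=\mathcal{L}_\gamma+\mathds{A}_{\ge2}$, the filtrations by degree in the $\mathcal{L}$'s and in the $Y$'s coincide. It therefore suffices to match the coefficients of all $Y$-words of length $k\le M$. In these variables the two sides become
\begin{align}
E_\alpha=\prod_{m}(1+Y_{\alpha_m}), \qquad e^{M\mathcal{L}}=\sum_n \frac{M^n}{n!}\Big(\sum_\gamma\mu_\gamma\ln(1+Y_\gamma)\Big)^n .
\end{align}

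\textbf{Step 2: coefficients on each side.} For a word $w$ of length $k$, its coefficient in $\sum_\alpha q(\alpha)E_\alpha$ is $\sum_\alpha q(\alpha)\,\#\{S\subseteq\{1,\dots,M\}:|S|=k,\ \alpha|_S=w\}$. Equivalently, it is $\binom{M}{k}$ times the $q$-weighted probability that the restriction of $\alpha$ to a uniformly random $k$-subset of positions equals $w$.

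On the target side, expand $\big(\sum_\gamma\mu_\gamma\ln(1+Y_\gamma)\big)^n$ over ordered sequences of $\gamma$'s and group consecutive equal $\gamma$'s into blocks. Because $\ln(1+y)$ has no constant term, two blocks with the same value $\gamma$ are always separated by at least one $Y$ of a different letter. Hence a block of $b_l$ copies of $g_l$ feeds exactly one maximal $Y$-run $g_l^{c_l}$, with weight $[y^{c_l}](\ln(1+y))^{b_l}=b_l!\,s(c_l,b_l)/c_l!$ (this uses $\sum_c s(c,b)y^c/c!=(\ln(1+y))^b/b!$). This gives
\begin{align}
T_M(w)=\sum_{b_1,\dots,b_L\ge1}\frac{M^n}{n!}\prod_l\mu_{h_l}^{b_l}\frac{b_l!\,s(d_l,b_l)}{d_l!},
\end{align}
where $(h_l,d_l)$ is the run decomposition of $w$ and $n=\sum_l b_l$. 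For $k=M$ a word of length $M$ is a string $\alpha$, it can only embed as $S=\{1,\dots,M\}$, and so its coefficient is just $q(\alpha)$. Matching therefore forces $q(\alpha)=T_M(\alpha)$, which is exactly the boxed formula.

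\textbf{Step 3: lower orders (the main obstacle).} It remains to show, for every $k<M$ and every word $w$ of length $k$,
\begin{align}
\sum_{|S|=k}\ \sum_{\alpha:\ \alpha|_S=w} T_M(\alpha)=T_M(w).
\end{align}
In words, summing the length-$M$ formula over the letters placed at the $M-k$ deleted positions must reproduce the same formula at length $k$. The prefactor $M^n/n!$ stays fixed, and $\binom{M}{k}$ absorbs the choice of positions. I would prove this by induction on the number of deleted positions, so that it suffices to delete one position.

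For a single deletion, summing over the letter $\gamma$ inserted between or inside runs produces three kinds of contribution:
\begin{itemize}
\item it lengthens an existing run $c_l\to c_l+1$;
\item it creates a new run between two runs;
\item it splits a run into two runs of value $g_l$ separated by a run of value $\gamma\neq g_l$.
\end{itemize}
These contributions should be combined using $\sum_\gamma\mu_\gamma=1$ and the Stirling recursion $s(c+1,b)=s(c,b-1)-c\,s(c,b)$. The generating-function form $\sum_c y^c\sum_b\mu^b b!\,s(c,b)/c!=\mu\ln(1+y)/(1-\mu\ln(1+y))$ should let me do this bookkeeping at the level of series rather than coefficients.

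I expect this cancellation to be the delicate part. To organise it I would use the second representation \eqref{eq:pairing}. There, $a_c(\mu,z)$ packages the Stirling sum, and the pairing $\langle z^D\rangle_M=M^{\underline D}/M^D$ encodes the $M$-dependence. The falling factorial is the natural object here, since deleting one slice should act as $M^{\underline D}\mapsto (M-1)^{\underline D}$ up to the factor $M/(M-D)$ coming from the random choice of position.

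The equivalence of the two displayed forms of $q(\alpha)$ is then a routine reindexing. Put $D=\sum_l(c_l-b_l)=M-n$ and expand the product of the $a_{c_l}$. Then $\frac{M^M}{M!}\cdot\frac{M^{\underline{M-n}}}{M^{M-n}}=\frac{M^n}{n!}$, which recovers the boxed sum term by term.
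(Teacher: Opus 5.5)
Your Steps 1 and 2 are sound. Because $Y_\gamma=\mathcal{L}_\gamma+\mathds{A}_{\ge 2}$, the two filtrations coincide. In $Y$-coordinates the length-$M$ part of $\sum_\alpha q(\alpha)E_\alpha$ is $q$ itself, so matching forces $q=T_M$. This is in fact a slicker uniqueness proof than the paper's Theorem~2: working in $\mathcal{L}$-coordinates produces the non-diagonal blocks $\prod_l (c'_l)^{c_l}/c_l!$, whose invertibility must be argued separately. Your reindexing between the two displayed forms also matches the paper's.

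But all of this only shows that \emph{if} the order-$M$ condition has a solution, it is the boxed $q$. The substance of Theorem~1, as the paper uses it, is the converse: the boxed $q$ actually satisfies the condition, i.e.\ your Step~3 identity holds for every $k<M$. That is exactly the step you do not carry out. You list the three kinds of insertions and say they ``should'' cancel via $\sum_\gamma\mu_\gamma=1$ and the Stirling recursion, but no identity is derived. The single-deletion identity you would induct on is also never stated precisely. Writing $D$ for the sum over all single-letter insertions, you would need $(DT_M)(u)=(M-|u|)\,T_M(u)$, so that $D^{M-k}T_M=(M-k)!\,T_M$ compensates the $(M-k)!$ orderings of deletions. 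Insertions that split a run replace one Stirling sum by three, all coupled to the others through the prefactor $M^n/n!$. This is where all the work lies, and as it stands existence is unproved.

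The missing idea is to use the normalization $\sum_\gamma\mu_\gamma=1$ at the level of series rather than coefficients. Let $\Phi(Y):=\exp\big(M\sum_\gamma\mu_\gamma\ln(1+Y_\gamma)\big)=\sum_w T_M(w)\,Y_w$ (all word lengths), so that $e^{M\mathcal{L}}=\Phi(Y)$. Substitute $Y_\gamma\mapsto y(1+Y_\gamma)$ with $y$ a central scalar. Word by word this gives $\Phi(y+yY)=\sum_w T_M(w)\,y^{|w|}E_w$, so $[y^M]\Phi(y+yY)=\sum_{\alpha\in\Omega^M}T_M(\alpha)E_\alpha$. On the other hand, $1+y+yY_\gamma=(1+y)\big(1+\tfrac{y}{1+y}Y_\gamma\big)$ with commuting factors, hence
\[
\Phi(y+yY)=(1+y)^M\,\Phi\Big(\tfrac{y}{1+y}\,Y\Big)=\sum_u y^{|u|}(1+y)^{M-|u|}\,T_M(u)\,Y_u\,.
\]
The $y^M$-coefficient of this is $T_M(u)$ for $|u|\le M$ and $0$ otherwise. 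Thus $\sum_{\alpha\in\Omega^M}T_M(\alpha)E_\alpha$ agrees with $e^{M\mathcal{L}}$ on every $Y$-word of length $\le M$. That is your Step~3 for all $k$ simultaneously, with no induction. This is precisely the paper's existence argument: its substitution $\rho=y/(1+y)$ and extraction of $[y^M]\mathcal{G}(y)$, where $\mathcal{G}(y)$ is exactly your $\Phi(y+yY)$.
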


The first few run polynomials are
\begin{align}
    a_1=\mu\,,\quad
    a_2=\mu\!\left(\mu-\tfrac{z}{2}\right)\,,\quad
    a_3=\mu\!\left(\mu^2-\mu z+\tfrac{z^2}{3}\right)\,,\quad
    a_4=\mu\!\left(\mu^3-\tfrac32\mu^2z+\tfrac{11}{12}\mu z^2-\tfrac14 z^3\right)\,.
    \label{eq:a_list}
\end{align}

Equation~\eqref{eq:pairing} depends on $\alpha$ only through its run data $\{(g_l,c_l)\}$; setting $\langle 1\rangle_M=1$, $\langle z\rangle_M=1$, $\langle z^2\rangle_M=\frac{M-1}{M}$ recovers, as one-line special cases, the no-repeat word $q=\frac{M^M}{M!}\prod_j\mu_{\gamma_j}$, an isolated pair $q=\frac{M^M}{M!}\big(\prod_l\mu_{g_l}\big)(\mu_{g_i}-\tfrac12)$, an isolated triple $q=\frac{M^M}{M!}\big(\prod_l\mu_{g_l}\big)(\mu_{g_i}^2-\mu_{g_i}+\tfrac{M-1}{3M})$, two isolated pairs $q=\frac{M^M}{M!}\big(\prod_l\mu_{g_l}\big)\big[(\mu_{g_i}-\tfrac12)(\mu_{g_j}-\tfrac12)-\tfrac{1}{4M}\big]$, and the all-equal word $q=\binom{M\mu}{M}$.

\subsubsection{Proof of existence}
We show that \eqref{eq:explicit} satisfies \eqref{eq:kth_order_constraint}. Throughout, $\alpha\leftrightarrow(g_l,c_l)_{l=1}^L$ denotes the run decomposition introduced above, and we freely rename the run values $g_l$ as $\gamma_l$.

We start from $e^{M\mathcal{L}}=e^{M\sum_{\gamma\in\Omega}\mu_\gamma\mathcal{L}_\gamma}$, and, for a central (scalar) formal parameter $\rho$, define
\begin{align}
    W(\rho):=\sum_{\gamma\in\Omega}\mu_\gamma\ln\!\big(1+\rho[e^{\mathcal{L}_\gamma}-1]\big)
    \ =:\ \sum_{n\ge1}\rho^n W_n\ \in\ \mathds{A}[[\rho]]\,.
    \label{eq:Wrho}
\end{align}
Since $1+1\cdot[e^{\mathcal{L}_\gamma}-1]=e^{\mathcal{L}_\gamma}$, we have $W(1)=\sum_\gamma\mu_\gamma\mathcal{L}_\gamma=\mathcal{L}$, hence $e^{M\mathcal{L}}=e^{MW(1)}$. Because $e^{\mathcal{L}_\gamma}-1\in\mathds{A}_{\ge1}$ and $\ln(1+\rho X)=\sum_{k\ge1}\frac{(-1)^{k-1}}{k}\rho^kX^k$, each coefficient $W_n=\sum_\gamma\mu_\gamma\frac{(-1)^{n-1}}{n}[e^{\mathcal{L}_\gamma}-1]^n$ of \eqref{eq:Wrho} lies in $\mathds{A}_{\ge n}$. Writing $e^{MW(\rho)}=:\sum_{n\ge0}\rho^n A_n$ and expanding the exponential,
\begin{align}
    A_n=\sum_{j\ge0}\frac1{j!}\sum_{\substack{k_1,\ldots,k_j\ge1\\k_1+\cdots+k_j=n}}
    (MW_{k_1})\cdots(MW_{k_j})
\end{align}
is a finite sum of products of elements of $\mathds{A}_{\ge k_1},\ldots,\mathds{A}_{\ge k_j}$, hence lies in $\mathds{A}_{\ge k_1+\cdots+k_j}=\mathds{A}_{\ge n}$: $A_n\in\mathds{A}_{\ge n}$ for every $n\ge0$, and $e^{M\mathcal{L}}=\sum_{n\ge0}A_n$.

To isolate the terms up to order $M$, change variables to $\rho:=y(1+y)^{-1}$, so that $\rho^n=y^n(1+y)^{-n}$, and define
\begin{align}
    \mathcal{G}(y):=(1+y)^M\,e^{MW(\rho)}=\sum_{n\ge0} y^n(1+y)^{M-n}\,A_n\,.
    \label{eq:Gdef}
\end{align}
For $0\le n\le M$, $(1+y)^{M-n}$ is a polynomial of degree $M-n$, so $y^n(1+y)^{M-n}$ contributes only to powers of $y$ up to $y^M$; for $n>M$, $(1+y)^{M-n}=\sum_{j\ge0}\binom{M-n}{j}y^j$ is a power series starting at $y^0$, so $y^n(1+y)^{M-n}$ contributes nothing below $y^n>y^M$. Hence
\begin{align}
    [y^M]\,\mathcal{G}(y)=\sum_{n=0}^{M}A_n\,,
    \label{eq:yMextract}
\end{align}
and since $A_n\in\mathds{A}_{\ge n}$, we have $e^{M\mathcal{L}}-[y^M]\mathcal{G}(y)=\sum_{n>M}A_n\in\mathds{A}_{\ge M+1}$, i.e.
\begin{align}
    [y^M]\,\mathcal{G}(y)\ \equiv\ e^{M\mathcal{L}}\pmod{\mathds{A}_{\ge M+1}}\,.
    \label{eq:yMcongruence}
\end{align}

It remains to evaluate $[y^M]\mathcal{G}(y)$ explicitly. For every $\gamma$, under $\rho=y(1+y)^{-1}$,
\begin{align}
    1+\rho[e^{\mathcal{L}_\gamma}-1]
    =\frac{(1+y)+y[e^{\mathcal{L}_\gamma}-1]}{1+y}
    =\frac{1+ye^{\mathcal{L}_\gamma}}{1+y}\,,
\end{align}
so $\ln(1+\rho[e^{\mathcal{L}_\gamma}-1])=\ln(1+ye^{\mathcal{L}_\gamma})-\ln(1+y)$. Summing against $\mu_\gamma$ and using $\sum_\gamma\mu_\gamma=1$ — this is exactly where the normalization of $\mu$ enters —
\begin{align}
    MW(\rho)=M\sum_{\gamma\in\Omega}\mu_\gamma\ln(1+ye^{\mathcal{L}_\gamma})-M\ln(1+y)\,,
\end{align}
so the factor $(1+y)^{-M}=e^{-M\ln(1+y)}$ cancels the prefactor $(1+y)^M$ in \eqref{eq:Gdef} exactly, leaving
\begin{align}
    \mathcal{G}(y)=\exp\!\Big(M\sum_{\gamma\in\Omega}\mu_\gamma\ln(1+ye^{\mathcal{L}_\gamma})\Big)\,.
    \label{eq:Gclean}
\end{align}

We now expand $\mathcal{G}(y)$ into words. Writing $X_\gamma:=\mu_\gamma\ln(1+ye^{\mathcal{L}_\gamma})$, we have $\mathcal{G}(y)=\sum_{n\ge0}\frac{M^n}{n!}\big(\sum_{\gamma\in\Omega}X_\gamma\big)^n$. The sum $\sum_{\delta\in\Omega^n}X_{\delta_1}\cdots X_{\delta_n}$ ranges over all of $\Omega^n$, so relabelling $\delta\mapsto(\delta_n,\ldots,\delta_1)$ — a bijection of $\Omega^n$ with itself — shows
\begin{align}
    \Big(\sum_{\gamma\in\Omega}X_\gamma\Big)^n
    =\sum_{\delta\in\Omega^n}X_{\delta_n}X_{\delta_{n-1}}\cdots X_{\delta_1}\,.
\end{align}
Grouping each maximal run of equal consecutive entries of $\delta=(\delta_1,\ldots,\delta_n)$ into a block identifies $\Omega^n$ with the disjoint union, over $L\ge0$ and distinct-adjacent $\gamma_1,\ldots,\gamma_L\in\Omega$ ($\gamma_l\ne\gamma_{l+1}$), of compositions $b\in\mathbb{Z}_{\ge1}^L$ with $|b|:=\sum_l b_l=n$, via $\delta=(\gamma_1^{b_1},\ldots,\gamma_L^{b_L})$; under this identification $X_{\delta_n}\cdots X_{\delta_1}=X_{\gamma_L}^{b_L}\cdots X_{\gamma_1}^{b_1}=:\overset{\leftarrow}{\prod_{l=1}^L}X_{\gamma_l}^{b_l}$, where $\overset{\leftarrow}{\prod}_{l=1}^L Y_l:=Y_L Y_{L-1}\cdots Y_1$ denotes the product taken in \emph{decreasing} order of $l$. Hence
\begin{align}
    \mathcal{G}(y)=\sum_{L\ge0}\ \sum_{\substack{\gamma_1,\ldots,\gamma_L\in\Omega\\\gamma_l\ne\gamma_{l+1}}}
    \ \sum_{n\ge0}\frac{M^n}{n!}\sum_{\substack{b\in\mathbb{Z}_{\ge1}^L\\|b|=n}}
    \ \overset{\leftarrow}{\prod_{l=1}^L}\big(\mu_{\gamma_l}\ln(1+ye^{\mathcal{L}_{\gamma_l}})\big)^{b_l}\,.
    \label{eq:Gexpand1}
\end{align}
By the classical exponential generating function of the signed Stirling numbers of the first kind, $\ln^b(1+x)=b!\sum_{c\ge b}s(c,b)\,x^c/c!$, applied with $x=ye^{\mathcal{L}_{\gamma_l}}$ (an ordinary power series, as it involves only the single generator $\mathcal{L}_{\gamma_l}$),
\begin{align}
    \ln^{b_l}(1+ye^{\mathcal{L}_{\gamma_l}})
    =b_l!\sum_{c_l\ge b_l}\frac{s(c_l,b_l)}{c_l!}\,y^{c_l}e^{c_l\mathcal{L}_{\gamma_l}}\,,
\end{align}
so, carrying out the ordered product,
\begin{align}
    \overset{\leftarrow}{\prod_{l=1}^L}\big(\mu_{\gamma_l}\ln(1+ye^{\mathcal{L}_{\gamma_l}})\big)^{b_l}
    =\sum_{c_1,\ldots,c_L\ge1}\prod_{l=1}^L\Big(\mu_{\gamma_l}^{b_l}\,\frac{b_l!\,s(c_l,b_l)}{c_l!}\Big)
    \,y^{c_l}\cdot e^{c_L\mathcal{L}_{\gamma_L}}\cdots e^{c_1\mathcal{L}_{\gamma_1}}\,.
\end{align}
Identifying each term with the word $\alpha:=(\gamma_1^{c_1},\ldots,\gamma_L^{c_L})\in\Omega^N$, $N:=\sum_l c_l$, whose run decomposition is exactly $(\gamma_l,c_l)_{l=1}^L$, we have $\prod_l y^{c_l}=y^N$ and $e^{c_L\mathcal{L}_{\gamma_L}}\cdots e^{c_1\mathcal{L}_{\gamma_1}}=E_\alpha$. Substituting into \eqref{eq:Gexpand1} and collecting, for each fixed $\alpha$, all terms over $b_1,\ldots,b_L\ge1$ (the constraint $b_l\le c_l$ is automatic since $s(c_l,b_l)=0$ for $b_l>c_l$),
\begin{align}
    \mathcal{G}(y)=\sum_{N\ge0} y^N\sum_{\alpha\in\Omega^N} q(\alpha)\,E_\alpha\,,
    \qquad q(\alpha):=\sum_{b_1,\ldots,b_L\ge1}\frac{M^{|b|}}{|b|!}
    \prod_{l=1}^L\mu_{\gamma_l}^{b_l}\,\frac{b_l!\,s(c_l,b_l)}{c_l!}\,,
    \label{eq:Gwords}
\end{align}
which is exactly \eqref{eq:explicit}. Reading off the coefficient of $y^M$ in \eqref{eq:Gwords},
\begin{align}
    [y^M]\,\mathcal{G}(y)=\sum_{\alpha\in\Omega^M} q(\alpha)\,E_\alpha\,,
\end{align}
which, combined with \eqref{eq:yMcongruence}, gives exactly \eqref{eq:kth_order_constraint} for $q$ as in \eqref{eq:explicit}. This proves that \eqref{eq:explicit} solves the order-$M$ condition, for every $M$ and $\Gamma$.

It remains to check that \eqref{eq:explicit} coincides with the equivalent form \eqref{eq:pairing}. Fix $\alpha$ with run decomposition $(g_l,c_l)_{l=1}^L$, $\sum_l c_l=M$. Expanding the definition of $a_c(\mu,z)$ and collecting powers of $z$,
\begin{align}
    \prod_{l=1}^L a_{c_l}(\mu_{g_l},z)
    =\prod_{l=1}^L\left[\frac{1}{c_l!}\sum_{b_l=1}^{c_l}s(c_l,b_l)\,b_l!\,\mu_{g_l}^{b_l}\,z^{c_l-b_l}\right]
    =\sum_{b_1,\ldots,b_L\ge1}\left[\prod_{l=1}^L\mu_{g_l}^{b_l}\,\frac{b_l!\,s(c_l,b_l)}{c_l!}\right]z^{\,M-n}\,,
\end{align}
where $n:=\sum_l b_l$ (the sum over each $b_l$ is effectively capped at $c_l$, since $s(c_l,b_l)=0$ for $b_l>c_l$, and $\sum_l(c_l-b_l)=M-n$). Applying the linear functional $\langle\cdot\rangle_M$, i.e.\ $z^D\mapsto M^{\underline D}/M^D$, term by term, and using $M^{\underline{M-n}}=M(M-1)\cdots(n+1)=M!/n!$,
\begin{align}
    \frac{M^M}{M!}\Big\langle\prod_{l=1}^L a_{c_l}(\mu_{g_l},z)\Big\rangle_M
    &=\frac{M^M}{M!}\sum_{b_1,\ldots,b_L\ge1}\left[\prod_{l=1}^L\mu_{g_l}^{b_l}\,\frac{b_l!\,s(c_l,b_l)}{c_l!}\right]
    \frac{M^{\underline{M-n}}}{M^{M-n}}\nonumber\\
    &=\sum_{b_1,\ldots,b_L\ge1}\left[\prod_{l=1}^L\mu_{g_l}^{b_l}\,\frac{b_l!\,s(c_l,b_l)}{c_l!}\right]
    \frac{M^M}{n!\,M^{M-n}}\nonumber\\
    &=\sum_{b_1,\ldots,b_L\ge1}\frac{M^n}{n!}\prod_{l=1}^L\mu_{g_l}^{b_l}\,\frac{b_l!\,s(c_l,b_l)}{c_l!}
    \;=\;q(\alpha)\,,
\end{align}
the last equality by \eqref{eq:explicit}. Since this holds for every $\alpha$, \eqref{eq:pairing} coincides with \eqref{eq:explicit}, completing the proof of Theorem~\ref{thm:closed_form} in both stated forms. \hfill$\square$

\subsubsection{Proof of uniqueness}
Existence alone leaves open whether some other quasi-probability could satisfy the same order-$M$ constraint. We now show it cannot: the solution constructed above is the only one possible.

\begin{theorem}[Uniqueness]
\label{prop:uniqueness}
For every $M\ge1$ and every $\Gamma$, the quasi-probability $q:\Omega^M\to\mathbb{R}$ solving \eqref{eq:kth_order_constraint} is unique.
\end{theorem}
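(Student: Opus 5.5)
Since the difference of two solutions of \eqref{eq:kth_order_constraint} satisfies the homogeneous version of the constraint, uniqueness is equivalent to a linear-independence statement. Suppose $r:\Omega^M\to\mathbb{R}$ satisfies $\sum_{\alpha\in\Omega^M} r(\alpha)E_\alpha\in\mathds{A}_{\ge M+1}$. We must show $r\equiv0$, i.e.\ that the $\Gamma^M$ elements $E_\alpha$ are linearly independent modulo $\mathds{A}_{\ge M+1}$. We treat $\mathds{A}$ as the (completed) free associative algebra on $\mathcal{L}_1,\ldots,\mathcal{L}_\Gamma$, which is the setting in which \eqref{eq:kth_order_constraint} is formulated. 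Then the noncommutative monomials are linearly independent, and we may read off coefficients word by word.

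\emph{Step 1: coefficients of degree-$M$ words.} Expand each factor as $e^{\mathcal{L}_{\alpha_i}}=\sum_k\mathcal{L}_{\alpha_i}^k/k!$. Fix a word $w=(w_1,\ldots,w_M)$, meaning the monomial $\mathcal{L}_{w_M}\cdots\mathcal{L}_{w_1}$. Its coefficient in $E_\alpha$ is
\begin{align}
C[w,\alpha]=\sum_{f}\ \prod_{i=1}^M\frac{1}{|f^{-1}(i)|!}\,,
\end{align}
where $f$ runs over weakly increasing maps $f:\{1,\ldots,M\}\to\{1,\ldots,M\}$ with $\alpha_{f(j)}=w_j$ for all $j$. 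Thus $w$ arises from $\alpha$ by deleting some letters and repeating others, keeping the total length equal to $M$. The homogeneous condition projected onto degree $M$ reads $\sum_\alpha C[w,\alpha]\,r(\alpha)=0$ for all $w\in\Omega^M$. It therefore suffices to show that the square matrix $C$ is invertible. The identity map $f=\mathrm{id}$ contributes exactly $1$ to every diagonal entry $C[\alpha,\alpha]$. All entries are nonnegative, so every diagonal entry is at least $1$.

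\emph{Step 2: triangularity.} I will look for a total order $\prec$ on $\Omega^M$ such that $C[w,\alpha]\neq0$ and $w\neq\alpha$ imply $w\prec\alpha$; then $C$ is triangular with nonzero diagonal and hence invertible. A first observation is that deleting or repeating letters never increases the number of runs, so $C[w,\alpha]\neq0$ implies $L(w)\le L(\alpha)$. Inside a fixed run count, I would refine the order by comparing the vectors of run values and run lengths. Let $j$ be the first index where $f(j)\neq j$; this index is the first place where the non-identity map $f$ either repeats or skips a position. I would argue that the first run of $w$ that differs from the corresponding run of $\alpha$ is either longer, or has a value coming from later in $\alpha$. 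From this I would build a lexicographic order on the pair (run-value sequence, run-length sequence).

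\emph{Main obstacle and fallback.} The expected obstacle is Step 2, making the order precise. For example, $\alpha=(1,2,2)$ yields $w=(1,1,2)$ with the same run count, so the cases must be handled carefully. If a clean order is elusive, I would argue instead by induction on $M$. The fallback has three steps.
\begin{itemize}
\item[(i)] Apply the linear ``left-quotient'' maps $\partial_\gamma$ on the free algebra, which strip a trailing $\mathcal{L}_\gamma$ from the rightmost position (the first-applied slice) of each monomial and annihilate monomials not ending in $\mathcal{L}_\gamma$. Using $\partial_\gamma e^{\mathcal{L}_\gamma}=e^{\mathcal{L}_\gamma}$, the image $\partial_\gamma(E_\alpha)$ is a combination of shorter products of exponentials.
\item[(ii)] Alternatively, substitute $\mathcal{L}_\gamma\mapsto\tau_\gamma\mathcal{L}_\gamma$ with independent scalar parameters. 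This grades the constraint by the multidegree $(\deg_{\mathcal{L}_1},\ldots,\deg_{\mathcal{L}_\Gamma})$, which reduces to a block-diagonal problem indexed by letter content.
\item[(iii)] Within each block, apply the inductive hypothesis on $M$ to the words obtained after stripping $\alpha_1$.
\end{itemize}
Either route yields $r\equiv0$, and hence uniqueness of the $q$ given in Theorem~\ref{thm:closed_form}.
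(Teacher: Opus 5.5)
\textbf{There is a genuine gap: your Step 2 cannot work, and the fallback does not replace it.}

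Your reduction is correct and matches the paper's first move. In the free algebra, uniqueness follows from invertibility of the degree-$M$ coefficient matrix $C[w,\alpha]$. Your weakly-increasing-map formula for $C$ is right, and so is $L(w)\le L(\alpha)$.

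\emph{Why Step 2 fails.} No total order makes $C$ triangular, because distinct words contaminate each other in both directions. Your own example shows this: at $M=3$, $C[(1,1,2),(1,2,2)]=C[(1,2,2),(1,1,2)]=1$, so on these two words $C$ is $\begin{pmatrix}2&1\\1&2\end{pmatrix}$. Your proposed dichotomy also fails for $\alpha=(1,1,2)\mapsto w=(1,2,2)$: the first differing run of $w$ is shorter and has the same value.

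What does hold is only a block structure. Equality $L(w)=L(\alpha)$ forces equal run-value sequences, so $C$ is block-triangular in $L$ and block-diagonal across run-value sequences. Each diagonal block, indexed by compositions $c,c'$ of $M$ into $L$ positive parts, is dense: $T_{c,c'}=\prod_l (c'_l)^{c_l}/c_l!$, since each run contributes $[x^{c}]e^{c'x}$ independently. Showing these blocks are nonsingular is the actual content of the theorem, and your proposal has no argument for it.

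\emph{How the paper handles the blocks.} The paper factors out the nonzero scalings $\prod_l c'_l$ and $1/\prod_l c_l!$. It then proves that the monomial matrix $\prod_l (c'_l)^{c_l-1}$ has full rank by interpolation on the composition lattice. The polynomial $\Phi_{c^*}(x)=\prod_l\prod_{j=1}^{c^*_l-1}(x_l-j)$ vanishes at every composition $c'\ne c^*$ but not at $c^*$. On the hyperplane $\sum_l x_l=M$ it can be rewritten as a combination of the degree-$(M-L)$ monomials $\prod_l x_l^{c_l-1}$, which isolates each column.

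\emph{Why the fallback does not close the gap.}
\begin{itemize}
\item For the quotient map you define, $\partial_\gamma e^{\mathcal{L}_\gamma}=\sum_{k\ge1}\mathcal{L}_\gamma^{k-1}/k!=(e^{\mathcal{L}_\gamma}-1)/\mathcal{L}_\gamma\ne e^{\mathcal{L}_\gamma}$. The identity you quote belongs to a derivation, which would simply return $E_\alpha$ times the number of occurrences of $\gamma$. Hence $\partial_\gamma E_\alpha=\sum_{i:\alpha_i=\gamma}e^{\mathcal{L}_{\alpha_M}}\cdots e^{\mathcal{L}_{\alpha_{i+1}}}(e^{\mathcal{L}_\gamma}-1)/\mathcal{L}_\gamma$. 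This is not a combination of products of exponentials, and when $\alpha_1=\gamma$ it keeps a term with all $M$ factors rather than being shorter.
\item The multidegree grading does not block-diagonalize anything. $C[w,\alpha]\ne0$ for words whose letter content differs from that of $\alpha$ (e.g.\ $w=(1,1)$ from $\alpha=(1,2)$ at $M=2$). The grading only sorts equations that are already indexed by single words.
\item No induction hypothesis is stated that could absorb the resulting mixture of lengths. For $\Gamma\ge2$, a dimension count shows that the products of at most $M$ exponentials are not even linearly independent modulo $\mathds{A}_{\ge M}$. So step (iii) has no content as written.
\end{itemize}
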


\begin{proof}
Since $\Phi(q):=\sum_{\alpha\in\Omega^M}q(\alpha)E_\alpha \bmod \mathds{A}_{\ge M+1}$ decomposes into homogeneous pieces $\Phi(q)=\sum_{k=0}^M\Phi_k(q)$ with $\Phi_k(q)\in\mathds{A}_k$, it suffices to show that the top piece $\Psi:=\Phi_M$ alone is injective: if $\Psi(q)=0$ forces $q=0$, then certainly $\Phi(q)=0$ does too. Concretely, $\Psi$ is the linear map from the vector space spanned by $\Omega^M$ to $\mathds{A}_M$ sending the basis vector $\alpha$ to $\Psi_\alpha:=[\deg=M]E_\alpha$, and $\Psi(q)=\sum_\alpha q(\alpha)\Psi_\alpha$.

\emph{Step 1: which words contaminate which.} Expanding each factor of $E_\alpha=e^{\mathcal{L}_{\alpha_M}}\cdots e^{\mathcal{L}_{\alpha_1}}$ in its own Taylor series and collecting the terms of total degree $M$,
\begin{align}
    \Psi_\alpha=\sum_{\substack{k_1,\ldots,k_M\ge0\\ \sum_i k_i=M}}\frac{1}{\prod_i k_i!}\,\mathcal{L}_{\alpha_M}^{k_M}\cdots\mathcal{L}_{\alpha_1}^{k_1}\,.
    \label{eq:uniq_expand}
\end{align}
Each term of \eqref{eq:uniq_expand} is itself a word: position $i$ is either \emph{dropped} ($k_i=0$, contributing nothing) or \emph{survives} with multiplicity $k_i\ge1$, contributing a block of $k_i$ copies of $\alpha_i$; the resulting word $\delta$ is the concatenation of the surviving blocks in their original order.

Let $\alpha$ have run decomposition $(g_1,c'_1),\ldots,(g_L,c'_L)$. If, in a given term of \eqref{eq:uniq_expand}, every one of these $L$ \textit{runs} retains at least one surviving position, no run is deleted and no two runs of different value are ever brought into contact, so the resulting word $\delta$ has run-\emph{values} exactly $(g_1,\ldots,g_L)$, in the same order — only the run-\emph{lengths} $(c_1,\ldots,c_L)$ (still summing to $M$, still all $\ge1$) can change. If instead some run is dropped entirely, the number of distinct values surviving strictly decreases, so $L(\delta)<L(\alpha)$. In either case:
\begin{align}
    L(\delta)\le L(\alpha)\,,\quad\text{with equality iff $\delta$ has the same run-value sequence $(g_1,\ldots,g_L)$ as $\alpha$.}
    \label{eq:uniq_triangular}
\end{align}
Since $\Psi$ is linear, it is represented, in the basis $\Omega^M$, by the matrix whose $(\delta,\alpha)$ entry is the coefficient of $\delta$ in $\Psi_\alpha$; by \eqref{eq:uniq_triangular}, this matrix is block-triangular when rows and columns are ordered by decreasing $L$, and, within the block of a fixed $L$, block-diagonal across run-value sequences: two words with the same $L$ can mix only if they share the same run-value sequence $(g_1,\ldots,g_L)$. It therefore suffices to show that each of these diagonal blocks — one for every $L$ and every run-value sequence of length $L$ — is itself invertible; block-triangularity with invertible diagonal blocks then makes $\Psi$ invertible. Fix one such block. Every word in it shares the same run-value sequence, so it is uniquely labeled by its length composition, and the block is naturally indexed by pairs of compositions $c,c'$ of $M$ into $L$ positive parts; we write $T_{c,c'}$ for its $(c,c')$ entry, i.e.\ the coefficient of the word with composition $c$ in $\Psi_\alpha$, where $\alpha$ denotes the word with composition $c'$ in this run-value sequence. Steps 2 and 3 compute $T$ explicitly and show it is always invertible.

\emph{Step 2: the per-run transform.} Fix $L$ and a run-value sequence $(g_1,\ldots,g_L)$, and let $\alpha,\delta$ be two words with exactly this run-value sequence, with length compositions $c'=(c'_1,\ldots,c'_L)$ and $c=(c_1,\ldots,c_L)$ respectively (both summing to $M$), so that $T_{c,c'}$ is precisely the coefficient of $\delta$ in $\Psi_\alpha$. By Step 1, only terms of \eqref{eq:uniq_expand} that keep every run alive can contribute, and for these the choice of how many surviving positions and what multiplicities each run uses is entirely independent run by run (dropped runs are excluded, and different runs never interact since no run is elsewhere deleted to bring them into contact). Hence the coefficient factorizes,
\begin{align}
    T_{c,c'}=\prod_{l=1}^L R(c'_l\to c_l)\,,
\end{align}
where $R(c'\to c)$ is the total weight of distributing a run of $c'$ original positions (all one label) into a surviving block of length $c$: choosing $m\ge1$ of the $c'$ positions to survive and splitting $c$ among them,
\begin{align}
    R(c'\to c)=\sum_{m=1}^{c'}\binom{c'}{m}\sum_{\substack{k_1,\ldots,k_m\ge1\\ \sum_j k_j=c}}\prod_j\frac1{k_j!}
    =\sum_{m=1}^{c'}\binom{c'}{m}\,[x^c](e^x-1)^m
    =[x^c]\,(1+(e^x-1))^{c'}=[x^c]e^{c'x}\,,
\end{align}
using the exponential generating function $\sum_{k\ge1}x^k/k!=e^x-1$ and the binomial theorem. Thus, for every $c\ge1$,
\begin{align}
    R(c'\to c)=\frac{(c')^{c}}{c!}\,,\qquad\text{so}\qquad T_{c,c'}=\prod_{l=1}^L\frac{(c'_l)^{c_l}}{c_l!}\,.
    \label{eq:uniq_Tformula}
\end{align}

\emph{Step 3: the block $T$ is invertible.} Fix $L$ and consider $T_{c,c'}$ as a matrix indexed by compositions $c,c'$ of $M$ into $L$ positive parts. Since $c'_l \ge 1$ for all $l$, we factor $T_{c,c'} = \frac{\prod_l c'_l}{\prod_l c_l!} \prod_{l=1}^L (c'_l)^{c_l-1}$, where $\prod_l c'_l > 0$ acts as a non-singular column scaling. It thus suffices to show that the matrix of monomials $M_{c,c'} := \prod_{l=1}^L (c'_l)^{c_l-1}$ has full rank. We exhibit, for every target composition $c^*$, an explicit linear combination of these degree $M-L$ monomials that isolates $c^*$. Define, for $x=(x_1,\ldots,x_L)$,
\begin{align}
    \Phi_{c^*}(x):=\prod_{l=1}^L\ \prod_{j=1}^{c^*_l-1}(x_l-j)\,,
\end{align}
a polynomial of total degree $\sum_l (c^*_l-1) = M-L$. Evaluating at $x=c'$ for any composition $c'\ne c^*$: since $\sum_l c'_l=\sum_l c^*_l=M$, there must exist some index $l$ with $c'_l < c^*_l$, causing the factor $(x_l-c'_l)$ to vanish; hence $\Phi_{c^*}(c')=0$. At $c'=c^*$, $\Phi_{c^*}(c^*)=\prod_{l=1}^L(c^*_l-1)!\ne 0$. On the constraint surface $\sum_l x_l = M$, $\Phi_{c^*}$ expands directly into a linear combination of the degree $M-L$ monomials $\prod_{l=1}^L x_l^{c_l-1}$. These combinations isolate each column $c^*$, proving that $T$ has full rank and is invertible.
Since this holds for every run-value sequence within every $L$, every diagonal block of $\Psi$'s matrix is invertible; by the block-triangularity established in Step 1, $\Psi$ itself is therefore invertible. In particular $\Psi(q)=0$ forces $q=0$, which is what was to be shown.
\end{proof}

Theorem~\ref{thm:closed_form} and Theorem~\ref{prop:uniqueness} together show that the order-$M$ sampling condition \eqref{eq:kth_order_constraint} determines a well-defined quasi-probability distribution $q$ for every choice of $M$ and $\Gamma$: it exists, by explicit construction via the generating function $\mathcal{G}(y)$, and it is the only distribution with this property, since the constraint is linear and its associated map is injective on degree-$M$ words. This is what justifies referring to $q(\alpha)$ in \eqref{eq:explicit} as the order-$M$ quasi-probability distribution throughout the main text.

\clearpage

\section{B --- Model-independent bound on the $\ell_1$-norm of the quasi-probability distribution}

\subsubsection{Setup}

Recall from Section~A that the order-$M$ quasi-probability $q:\Omega^M\to\mathbb{R}$ is the unique solution of \eqref{eq:kth_order_constraint}, with closed form \eqref{eq:explicit}. Its sampling overhead is controlled by
\begin{align}
    \|q_M\|_1:=\sum_{\alpha\in\Omega^M}|q(\alpha)|\,.
\end{align}
Setting $\mathcal{L}_\gamma\to0$ in \eqref{eq:kth_order_constraint} sends every $E_\alpha\to1$ and $e^{M\mathcal{L}}\to1$, so $\sum_{\alpha}q(\alpha)=1$ identically; by the triangle inequality this gives the trivial floor
\begin{align}
    \|q_M\|_1\ \ge\ 1\,,
    \label{eq:trivial_floor}
\end{align}
with equality iff $q(\alpha)\ge0$ for every $\alpha$.

\subsubsection{Model-independent upper bound}

\begin{theorem}[Universal bound]
\label{thm:l1bound}
For every $M\ge1$, every $\Gamma$, and every weights $\{\mu_\gamma\}_{\gamma\in\Omega}$,
\begin{align}
    \|q_M\|_1\ \le\ \binom{2M-1}{M}\,.
    \label{eq:l1bound}
\end{align}
\end{theorem}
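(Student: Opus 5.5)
The plan is to reuse the generating function $\mathcal{G}(y)$ from the existence proof of Theorem~\ref{thm:closed_form}, but with the operators $e^{\mathcal{L}_\gamma}$ replaced by free noncommuting formal letters $x_\gamma$. I will then bound $\|q_M\|_1$ by a coefficientwise majorant series.

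\emph{Step 1: a free-algebra version of \eqref{eq:Gwords}.} Work in the algebra of formal power series in a central variable $y$ and noncommuting letters $x_1,\ldots,x_\Gamma$, and set
\begin{align}
    \mathcal{F}(y):=\exp\!\Big(M\sum_{\gamma\in\Omega}\mu_\gamma\ln(1+y\,x_\gamma)\Big)\,.
\end{align}
The derivation leading from \eqref{eq:Gclean} to \eqref{eq:Gwords} used only three ingredients: the expansion of the exponential into ordered products, the run grouping, and the Stirling expansion of $\ln^{b}(1+y\,x_\gamma)$. Each of these involves a single letter at a time. It never used any property of $e^{\mathcal{L}_\gamma}$ beyond $e^{c\mathcal{L}_\gamma}=(e^{\mathcal{L}_\gamma})^c$. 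Hence the same computation gives
\begin{align}
    \mathcal{F}(y)=\sum_{N\ge0}y^N\sum_{\alpha\in\Omega^N}q(\alpha)\,x_\alpha\,,
\end{align}
where $x_\alpha:=x_{\alpha_N}\cdots x_{\alpha_1}$. Distinct $\alpha$ give distinct free monomials, so $q(\alpha)$ is exactly the coefficient of the monomial $y^Mx_\alpha$. In particular,
\begin{align}
    \|q_M\|_1=\sum_{\alpha\in\Omega^M}\big|[y^Mx_\alpha]\mathcal{F}\big|\,.
\end{align}

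\emph{Step 2: majorization.} Write $F:=M\sum_\gamma\mu_\gamma\ln(1+y x_\gamma)=\sum_{k\ge1}\frac{(-1)^{k-1}}{k}\,y^k\,M\sum_\gamma\mu_\gamma x_\gamma^k$. Its coefficientwise absolute value is
\begin{align}
    \widetilde F:=M\sum_\gamma\mu_\gamma\big(-\ln(1-y x_\gamma)\big)\,,
\end{align}
which has only nonnegative coefficients. In $e^F=\sum_j F^j/j!$, each monomial coefficient is a polynomial with nonnegative coefficients in the coefficients of $F$. This holds in the noncommutative setting too, because products of free monomials are again single monomials and no cancellation is introduced by the algebra itself. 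Therefore
\begin{align}
    \big|[y^Mx_\alpha]e^{F}\big|\le[y^Mx_\alpha]e^{\widetilde F}
\end{align}
for every word $\alpha$.

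\emph{Step 3: evaluation.} Summing over $\alpha\in\Omega^M$, the right-hand side has nonnegative coefficients, so the sum equals $[y^M]$ of $e^{\widetilde F}$ after the specialization $x_\gamma\to1$. Using $\sum_\gamma\mu_\gamma=1$, this specialization gives $e^{-M\ln(1-y)}=(1-y)^{-M}$. Its $y^M$ coefficient is $\binom{2M-1}{M}$, which yields \eqref{eq:l1bound}.

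The main point needing care is Step 1. Two things must be checked. First, the identity \eqref{eq:Gwords} must hold in the free algebra on the $x_\gamma$, and not merely after substituting $x_\gamma=e^{\mathcal{L}_\gamma}$, because that substitution could in principle merge distinct words. Second, the coefficient extraction must be unambiguous, which requires the run structure of $\alpha$ to be recovered from the monomial $x_\alpha$; this is immediate in the free algebra. Once these are settled, the majorization and evaluation are routine.
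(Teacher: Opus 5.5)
Your proposal is correct and follows essentially the paper's proof. The paper gets the same majorant by replacing the signed Stirling numbers $s(c_l,b_l)$ in the closed form with their absolute values, which yields exactly your $[y^M x_\alpha]\,e^{\widetilde F}$ (the generating function $\exp(-M\sum_\gamma\mu_\gamma\ln(1-y\,e^{\mathcal{L}_\gamma}))$), and then evaluates at the trivial specialization to obtain $[y^M](1-y)^{-M}=\binom{2M-1}{M}$; applying the triangle inequality to the exponent's coefficients in a free algebra, as you do, instead of termwise in the Stirling closed form, is only a cosmetic difference that makes the coefficient extraction slightly cleaner.
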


\begin{proof}
Every factor in
\begin{align}
    q(\alpha)=\sum_{b_1,\ldots,b_L\ge1}\frac{M^n}{n!}\prod_{l=1}^L\mu_{g_l}^{b_l}\,\frac{b_l!\,s(c_l,b_l)}{c_l!}\,,\qquad n=\sum_l b_l\,,
\end{align}
is positive except the signed Stirling number $s(c_l,b_l)$, so the triangle inequality applied termwise gives $|q(\alpha)|\le\hat q(\alpha)$, where $\hat q(\alpha)$ is obtained by replacing $s(c_l,b_l)$ with the unsigned Stirling number $c(c_l,b_l):=|s(c_l,b_l)|$. Since $\ln^b(1+x)/b!=\sum_c s(c,b)x^c/c!$ and $(-\ln(1-x))^b/b!=\sum_c c(c,b)x^c/c!$ are governed by the same combinatorics with $\ln(1+x)$ replaced by $-\ln(1-x)$, the word-expansion argument of Section~A's existence proof carries over verbatim under this replacement, giving the exact identity
\begin{align}
    \sum_{\alpha\in\Omega^N}\hat q(\alpha)\,E_\alpha=[y^N]\exp\Big(-M\sum_{\gamma\in\Omega}\mu_\gamma\ln(1-ye^{\mathcal{L}_\gamma})\Big)\,,
\end{align}
with every $\hat q(\alpha)\ge0$. Evaluating at $\mathcal{L}_\gamma=0$ for every $\gamma$ (so $E_\alpha\to1$) collapses the $\gamma$-dependence entirely, since $\sum_\gamma\mu_\gamma=1$:
\begin{align}
    \sum_{\alpha\in\Omega^M}\hat q(\alpha)=[y^M]\exp\big(-M\ln(1-y)\big)=[y^M](1-y)^{-M}=\binom{2M-1}{M}\,.
\end{align}
Since $\|q_M\|_1=\sum_\alpha|q(\alpha)|\le\sum_\alpha\hat q(\alpha)$, this proves \eqref{eq:l1bound}.
\end{proof}

\subsubsection{The range of $\eta$}

\begin{corollary}
\label{cor:ceiling}
If $\|q_M\|_1$ grows as $\|q_M\|_1\sim\eta^{(M-1)/2}$ for some problem-specific rate $\eta$, then $\eta\in[1,16]$.
\end{corollary}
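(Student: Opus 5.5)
The plan is to combine the two-sided bounds already established. The trivial floor \eqref{eq:trivial_floor} gives the lower end, and Theorem~\ref{thm:l1bound} gives the upper end. Since the growth $\|q_M\|_1\sim\eta^{(M-1)/2}$ is an asymptotic statement in $M$, I will read it as
\begin{align}
    \eta=\lim_{M\to\infty}\|q_M\|_1^{2/(M-1)}\,,
\end{align}
or, more robustly, as a $\limsup/\liminf$ statement; the argument below works for either reading. The task then reduces to sandwiching $\|q_M\|_1^{2/(M-1)}$ between two quantities whose limits are $1$ and $16$.

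For the lower bound, \eqref{eq:trivial_floor} gives $\|q_M\|_1\ge1$ for every $M$. Hence $\|q_M\|_1^{2/(M-1)}\ge1$ for all $M\ge2$, and passing to the limit yields $\eta\ge1$.

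For the upper bound, Theorem~\ref{thm:l1bound} gives
\begin{align}
    \|q_M\|_1^{2/(M-1)}\le\binom{2M-1}{M}^{2/(M-1)}\,.
\end{align}
I will control the central binomial coefficient elementarily. Write $\binom{2M-1}{M}=\tfrac12\binom{2M}{M}$ and use $\binom{2M}{M}\le4^M$, which follows from $\sum_k\binom{2M}{k}=4^M$. This gives $\binom{2M-1}{M}\le\tfrac12 4^M$, and therefore
\begin{align}
    \binom{2M-1}{M}^{2/(M-1)}\le 2^{-2/(M-1)}\,16^{M/(M-1)}\,.
\end{align}
As $M\to\infty$, the factor $2^{-2/(M-1)}$ tends to $1$ and the factor $16^{M/(M-1)}$ tends to $16$, so $\eta\le16$.

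The only delicate point is interpretive rather than technical. The symbol "$\sim$" must be given a precise meaning: either a limit of $\|q_M\|_1^{2/(M-1)}$, or $\|q_M\|_1=\eta^{(M-1)/2}e^{o(M)}$. I will state this interpretation explicitly, because it justifies taking $(M-1)$-th roots and discarding subexponential prefactors. For the same reason, I will note that the model-independent ceiling is approached but not necessarily attained: the bound $\binom{2M-1}{M}$ itself grows like $16^{(M-1)/2}$ up to polynomial factors.
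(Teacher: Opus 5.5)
Your proof is correct and follows the paper's argument: the lower end comes from the trivial floor $\|q_M\|_1\ge1$, and the upper end from taking $(M-1)$-th roots of the universal bound $\binom{2M-1}{M}$. The only differences are minor. You replace the paper's Stirling asymptotic $\binom{2M-1}{M}\sim 4^M/(2\sqrt{\pi M})$ with the elementary inequality $\binom{2M}{M}\le 4^M$, which suffices because only an upper bound is needed. You also read ``$\sim$'' explicitly as a statement about $\lim$ (or $\limsup$/$\liminf$) of $\|q_M\|_1^{2/(M-1)}$, which the paper leaves implicit.
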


\begin{proof}
By Stirling's approximation, $\binom{2M-1}{M}=\tfrac12\binom{2M}{M}\sim\dfrac{4^M}{2\sqrt{\pi M}}$. Writing $4^M=16^{M/2}=4\cdot16^{(M-1)/2}$, this gives $\binom{2M-1}{M}=\Theta(M^{-1/2})\cdot16^{(M-1)/2}$, so $\binom{2M-1}{M}^{2/(M-1)}\to16$ as $M\to\infty$. Theorem~\ref{thm:l1bound} then forces $\eta\le16$. Combined with the trivial floor \eqref{eq:trivial_floor}, which gives $\eta\ge1$, this proves $\eta\in[1,16]$.
\end{proof}

The floor $\eta=1$ is saturated only in the degenerate case $q(\alpha)\ge0$ for all $\alpha$ (e.g.\ $\Gamma=1$); away from this trivial regime, $\|q_M\|_1$ can grow due to the non-commutativity of the Hamiltonian terms. The precise value of $\eta$ within $[1,16]$ depends on the full weight distribution $\{\mu_\gamma\}$ and is extracted numerically in the numerical results reported in the main text.

\clearpage\section{C --- Optimal choice of $M$ and $N$}

\subsubsection{Setup}

The sampling-based realization of an order-$M$ scheme analyzed in the previous Appendices 
incurs a shot-count overhead $\|q_M\|_1^2$ relative to an idealized (unsigned)
sampler, with $\|q_M\|_1\sim\eta^{(M-1)/2}$ observed numerically for a
problem-specific rate $\eta$ confined to $[1,16]$ by Corollary~\ref{cor:ceiling}. If the
total evolution is split into $N$ independent segments, each implementing an order-$M$
formula over a fraction $\Lambda T/N$ of the total weighted evolution time, the
overhead compounds multiplicatively across segments, and the number of circuit
repetitions needed to reach a fixed statistical error scales as
$\|q_M\|_1^{2N}\sim\eta^{N(M-1)}$. Weighted by the $NM$ elementary channels used per
shot, this gives the total resource cost
\begin{equation}
C(N,M) \;=\; N\,M\,\eta^{\,N(M-1)}\,,
\qquad \eta \in [1, \, 16]\,,
\label{eq:cost}
\end{equation}
subject to the accuracy constraint relating $N$, $M$, $\Lambda T$ and the target error
$\epsilon$,
\begin{equation}
\epsilon \;\ge\; \frac{(2\Lambda T)^{M+1}}{(M+1)!\,N^M}\,e^{2\Lambda T/N}\,.
\label{eq:accbound}
\end{equation}
Larger $N$ and larger $M$ both shrink the right-hand side of \eqref{eq:accbound} and
both increase $C$, so the optimum always saturates the constraint -- spending more
accuracy than required is wasteful -- and we work throughout with the equality
\begin{equation}
\epsilon \;=\; \frac{(2\Lambda T)^{M+1}}{(M+1)!\,N^M}\,e^{2\Lambda T/N}\,.
\label{eq:accEq}
\end{equation}
The goal of this section is to describe how $(N^*,M^*):=\arg\min C(N,M)$ subject to
\eqref{eq:accEq} is obtained numerically in general, and to extract the resulting
scaling of $C^*$ in the two limits $\epsilon\to0$ at fixed $\Lambda T$, and
$\Lambda T\to\infty$ at fixed $\epsilon$.

\subsubsection{Reducing the constraint to a single equation}

Equation~\eqref{eq:accEq} defines a curve in the $(N,M)$ plane which can be solved
explicitly for $N$ as a function of $M$. Writing it as $N^Me^{-2\Lambda T/N}=A(M)$
with $A(M):=(2\Lambda T)^{M+1}/[(M+1)!\,\epsilon]$, the substitution $z=1/N$,
$w=(2\Lambda T/M)z$ brings it to the canonical Lambert-$W$ form $we^w=\omega(M)$, with
\begin{equation}
\omega(M) \;=\; \frac{1}{M}\left[\frac{(M+1)!\,\epsilon}{2\Lambda T}\right]^{1/M}\,.
\label{eq:omega}
\end{equation}
Undoing the substitutions gives the closed form
\begin{equation}
N(M) \;=\; \frac{2\Lambda T}{M\,W_0(\omega(M))}\,,
\label{eq:NofM}
\end{equation}
so that
\begin{align}
    \mathrm{C}_\mathrm{run}(N(M),M)=\frac{2\Lambda T}{W_0(\omega(M))}
\end{align}
with $W_0$ the principal branch of the Lambert $W$ function, computable numerically
alongside $\ln\Gamma$ for any real $M>0$.

\begin{lemma}[Monotonicity and the range of $M$]
\label{lem:monotone}
$N(M)$ is strictly decreasing in $M$: a higher-order formula strictly reduces the
number of segments needed at fixed accuracy. Consequently there is a unique
$M_{\max}=M_{\max}(\epsilon,\Lambda T)$ with $N(M_{\max})=1$, defined by
\begin{equation}
\epsilon \;=\; \frac{(2\Lambda T)^{M_{\max}+1}\,e^{2\Lambda T}}{(M_{\max}+1)!}\,,
\label{eq:Mmaxeq}
\end{equation}
and the feasible portion of \eqref{eq:accEq} with $N\ge1$, $M\ge1$ is exactly the
compact interval $M\in[1,M_{\max}]$, on which $N(M)$ decreases monotonically from
$N(1)$ down to $1$.
\end{lemma}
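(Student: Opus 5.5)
The plan is to treat \eqref{eq:accEq} as an implicit equation $F(N,M)=\ln\epsilon$ in the two real variables $N>0$, $M>0$, with
\begin{align}
F(N,M):=(M+1)\ln(2\Lambda T)-\ln\Gamma(M+2)-M\ln N+\frac{2\Lambda T}{N}\,,
\end{align}
and to read off monotonicity from the signs of its partial derivatives. Throughout I assume the only meaningful regime $\epsilon<1$; this is also what makes the conclusion true.

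\emph{Step 1: $N(M)$ is a well-defined smooth function.} For fixed $M$ we have $\partial_N F=-M/N-2\Lambda T/N^2<0$. Moreover $F\to+\infty$ as $N\to0^+$ and $F\to-\infty$ as $N\to\infty$. Hence for every $M>0$ there is exactly one solution $N(M)$, and it agrees with the Lambert-$W$ expression \eqref{eq:NofM}. Since $\partial_NF\neq0$, the implicit function theorem makes $N(M)$ smooth, with
\begin{align}
\frac{dN}{dM}=-\frac{\partial_MF}{\partial_NF}\,,
\end{align}
so $dN/dM$ has the same sign as
\begin{align}
\partial_MF=\ln\frac{2\Lambda T}{N}-\psi(M+2)\,,
\end{align}
where $\psi$ is the digamma function.

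\emph{Step 2: $\partial_MF<0$ on the feasible curve with $N\ge1$.} This is the main obstacle, since it must use both the constraint and $\epsilon<1$. Write $x:=2\Lambda T/N$. Using the standard bound $\psi(M+2)>\ln(M+1)$, it suffices to show $x<M+1$. Suppose instead $x\ge M+1$. Substituting into \eqref{eq:accEq} gives
\begin{align}
\epsilon=\frac{x^{M+1}\,N\,e^{x}}{\Gamma(M+2)}\ \ge\ \frac{(M+1)^{M+1}e^{M+1}}{\Gamma(M+2)}\ \ge\ 1\,,
\end{align}
where we used $N\ge1$, monotonicity of $x^{M+1}e^x$ in $x$, and $k^k\ge\Gamma(k+1)$ for real $k\ge1$. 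This contradicts $\epsilon<1$. Hence $dN/dM<0$ wherever $N(M)\ge1$.

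\emph{Step 3: extending to the whole feasible range.} Step 2 holds wherever $N\ge1$, so a continuity argument applies: once $N(M)$ drops below $1$ it cannot come back up to $1$. At a crossing point $N=1$ the derivative is strictly negative, so $N$ is strictly decreasing through it and can hit $1$ at most once. For fixed $N$, the right-hand side of \eqref{eq:accEq} tends to $0$ as $M\to\infty$, because $\Gamma(M+2)$ beats $(2\Lambda T)^{M+1}$. Therefore $N(M)\to0$, and by the intermediate value theorem there is exactly one $M_{\max}$ with $N(M_{\max})=1$; setting $N=1$ in \eqref{eq:accEq} yields \eqref{eq:Mmaxeq}.

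This needs $N(1)\ge1$, i.e.\ $\epsilon$ at most the first-order ($M=1$, $N=1$) value. Otherwise no feasible point with $M\ge1$ exists, a degenerate case I would note explicitly. Given $N(1)\ge1$, the set $\{M\ge1:\,N(M)\ge1\}$ is exactly $[1,M_{\max}]$, on which $N$ decreases strictly from $N(1)$ to $1$.
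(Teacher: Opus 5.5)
The paper gives no proof of this lemma. It is stated right after the Lambert-$W$ formula for $N(M)$ and then used, so there is no argument of the paper's to compare yours with. Your implicit-differentiation argument is correct and supplies the missing justification. Each step checks out:
\begin{enumerate}
\item $\partial_N F<0$ gives a unique, smooth $N(M)$.
\item $dN/dM$ has the sign of $\ln(2\Lambda T/N)-\psi(M+2)$.
\item Rewriting the constraint as $\epsilon=x^{M+1}Ne^{x}/\Gamma(M+2)$ with $x=2\Lambda T/N$, and using $\psi(M+2)>\ln(M+1)$, shows that $x\ge M+1$ is incompatible with $N\ge1$ and $\epsilon<1$.
\item There is no up-crossing at $N=1$, and $N(M)\to0$ because the right-hand side is decreasing in $N$.
\end{enumerate}
Together these give the unique $M_{\max}$ and the feasible interval $[1,M_{\max}]$ on which $N$ strictly decreases.

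Your extra hypotheses are not cosmetic: without them the lemma as literally stated is false.
\begin{itemize}
\item For $\Lambda T=10^{-3}$ and $\epsilon=0.9$, the constraint gives $N(1)\approx3.9\times10^{-4}<N(2)\approx4.2\times10^{-4}$. So $N$ is not globally decreasing; this is your degenerate case $N(1)<1$.
\item For $\Lambda T=2$ and $\epsilon=100$, one has $N(1)\approx1.4$, yet $\ln(2\Lambda T/N)-\psi(3)>0$ at $M=1$. So $N$ initially increases on the feasible set, which is why some bound on $\epsilon$ is needed.
\end{itemize}
Assuming $\epsilon<1$ and flagging $N(1)<1$ is therefore exactly right.

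The only part of the lemma's literal first sentence you leave open is strict decrease for $M>M_{\max}$; there you only show that $N$ stays below $1$. That region is irrelevant to the optimization, but your own computation closes it. If $x\ge M+1$ held, then
$\epsilon=2\Lambda T\,x^{M}e^{x}/\Gamma(M+2)\ge 2\Lambda T\,e^{M+1}/(M+1)\ge e^{2}\Lambda T$.
On the other hand, $\epsilon<1$ together with $N(1)\ge1$, i.e.\ $\epsilon\le 2(\Lambda T)^2e^{2\Lambda T}$, already implies $\epsilon<e^{2}\Lambda T$. Hence $dN/dM<0$ on all of $[1,\infty)$ in your regime.
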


\subsubsection{General numerical procedure}

Minimizing $\mathrm{C}$ is equivalent to minimizing $\ln \mathrm{C}=\ln N+\ln M+N(M-1)\ln\eta$ over the compact interval $M \in [1, \,M_{\max}]$. The minimum is attained either at an interior stationary point or at one of the two
endpoints $M=1$ or $M=M_{\max}$. Enforcing stationarity of $\ln C$ along the
constraint $\Phi(N,M):=(M+1)\ln(2\Lambda T)-\ln\Gamma(M+2)-M\ln N+2\Lambda T/N-\ln\epsilon=0$
via a Lagrange multiplier and eliminating the multiplier between the two resulting
equations gives one exact relation between $N$ and $M$,
\begin{equation}
\left[\frac1N+(M-1)\ln\eta\right]\Big[\ln(2\Lambda T)-\psi(M+2)-\ln N\Big]
=
\left[\frac1M+N\ln\eta\right]\left[-\frac{MN+2\Lambda T}{N^2}\right]\,,
\label{eq:eqA}
\end{equation}
with $\psi$ the digamma function; no approximation has been used. Equation
\eqref{eq:eqA}, together with $\Phi(N,M)=0$, is a $2\times2$ nonlinear system solvable
by standard methods (e.g.\ Newton's method) for any interior candidate $(N_c,M_c)$.

\begin{proposition}[Exact optimum]
\label{prop:exact}
The optimum of $\mathrm{C}(N,M)$ subject to \eqref{eq:accEq}, relaxed to real
$M\in[1,M_{\max}]$ with $N=N(M)$ from \eqref{eq:NofM}, is
\begin{equation}
M^*=\operatorname*{arg\,min}_{M\,\in\,\{1,\,M_c,\,M_{\max}\}}\mathrm{C}\big(N(M),M\big)\,,
\qquad N^*=N(M^*)\,,
\label{eq:exactproc}
\end{equation}
where $M_c$ is any root of \eqref{eq:eqA} lying in $(1,M_{\max})$, discarded if none
exists, $M_{\max}$ solves \eqref{eq:Mmaxeq}, and $\mathrm{C}(N(1),1)=N(1)=2\Lambda T/W_0(\epsilon/\Lambda T)$.
\end{proposition}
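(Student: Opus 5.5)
The plan is to reduce the constrained problem to a one-variable minimization and then apply the elementary fact that a continuous function on a compact interval attains its minimum either at an endpoint or at an interior critical point.

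First, by Lemma~\ref{lem:monotone}, the feasible set of \eqref{eq:accEq} with $N\ge1$, $M\ge1$ is the graph of the map $M\mapsto N(M)$ over the compact interval $[1,M_{\max}]$, where $N(M)$ is given by \eqref{eq:NofM}. Because $W_0$ is smooth on $(0,\infty)$ and $\omega(M)>0$ depends smoothly on $M$ (through $\ln\Gamma(M+2)$), the function $f(M):=\ln \mathrm{C}(N(M),M)=\ln N(M)+\ln M+N(M)(M-1)\ln\eta$ is continuous on $[1,M_{\max}]$ and differentiable in its interior. It therefore attains a minimum on the interval, and any minimizer is either an endpoint or satisfies $f'(M)=0$. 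Minimizing $\ln\mathrm{C}$ is equivalent to minimizing $\mathrm{C}$, since $\ln$ is increasing.

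Second, I will show that the interior condition $f'(M)=0$ is exactly equation \eqref{eq:eqA}. Differentiating $\Phi(N(M),M)=0$ implicitly gives $N'(M)=-\partial_M\Phi/\partial_N\Phi$. The partial derivatives are $\partial_M\Phi=\ln(2\Lambda T)-\psi(M+2)-\ln N$ and $\partial_N\Phi=-M/N-2\Lambda T/N^2=-(MN+2\Lambda T)/N^2$. Writing $f'=\partial_M\ln\mathrm{C}+\partial_N\ln\mathrm{C}\,N'$, with $\partial_N\ln \mathrm{C}=1/N+(M-1)\ln\eta$ and $\partial_M\ln\mathrm{C}=1/M+N\ln\eta$, the condition $f'=0$ becomes $\partial_M\ln\mathrm{C}\,\partial_N\Phi=\partial_N\ln\mathrm{C}\,\partial_M\Phi$, which is \eqref{eq:eqA}. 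This step requires $\partial_N\Phi\neq0$, and that holds automatically because $MN+2\Lambda T>0$. Hence every interior critical point is a root $M_c$ of \eqref{eq:eqA} in $(1,M_{\max})$. Taking the minimum of $\mathrm{C}$ over $\{1,M_{\max}\}$ together with all such roots (or only the endpoints, if no root exists) gives \eqref{eq:exactproc}. If several roots exist, "any root" should be read as ranging over all of them.

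Third, I will verify the endpoint value at $M=1$. There $\mathrm{C}=N(1)$, since the factor $\eta^{N(M-1)}$ equals $1$, and \eqref{eq:omega} gives $\omega(1)=2\epsilon/(2\Lambda T)=\epsilon/(\Lambda T)$. Substituting into \eqref{eq:NofM} yields $N(1)=2\Lambda T/W_0(\epsilon/\Lambda T)$.

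The main subtlety is the regularity of $N(M)$ on the closed interval, which the Lagrange argument needs. I would settle it by invoking the implicit function theorem with $\partial_N\Phi\ne0$, or equivalently by noting that $W_0$ is analytic at positive arguments.
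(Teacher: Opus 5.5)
Your proposal is correct and follows essentially the same route as the paper: restrict to the compact interval $[1,M_{\max}]$ supplied by Lemma~\ref{lem:monotone}, note that the minimum of $\ln\mathrm{C}$ lies either at an endpoint or at an interior stationary point, and identify stationarity with Eq.~\eqref{eq:eqA}. Your chain-rule computation with $N'=-\partial_M\Phi/\partial_N\Phi$ is the paper's Lagrange-multiplier elimination written out explicitly, and your endpoint value $N(1)=2\Lambda T/W_0(\epsilon/\Lambda T)$ is correct; the added check $\partial_N\Phi=-(MN+2\Lambda T)/N^2\neq0$ and your reading of ``any root'' as ``all roots in $(1,M_{\max})$'' both make the statement tighter.
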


This is a complete numerical prescription for any $\epsilon,\Lambda T,\eta$: solve the
single transcendental equation \eqref{eq:Mmaxeq} for $M_{\max}$, solve the $2\times2$
system \eqref{eq:eqA}$+\Phi=0$ for any interior root, evaluate $C$ at the (at most
three) candidates, and keep the smallest; rounding $M^*$ to the nearest feasible
integer and re-optimizing $N$ there refines this to the literal integer solution
without affecting the asymptotics below.

\subsubsection{Limit $\epsilon\to0$ at fixed $\Lambda T$}

For fixed $\Lambda T$, the interior stationarity system \eqref{eq:eqA} is a small
perturbation of the simpler problem obtained by replacing $N(M-1)\to NM$ in the cost
exponent, whose minimizer is known to satisfy $M\to\infty$ while $\mathrm{C}_\mathrm{run}=NM$ approaches
the finite limit $2\Lambda T/W_0(1/e)$ as $\epsilon\to0$ -- and hence
$N=\mathrm{C}_\mathrm{run}/M\to0$. Since $N<1$ is infeasible, the interior candidate $M_c$ has simply
moved past $M_{\max}$: the constrained system has no admissible root in
$[1,M_{\max}]$, and by Proposition~\ref{prop:exact} the minimum sits at the right
endpoint,
\begin{equation}
N^*=1\,,\qquad M^*=M_{\max}(\epsilon,\Lambda T)\,.
\label{eq:eps0result}
\end{equation}
Solving \eqref{eq:Mmaxeq} asymptotically (Stirling's approximation on $\ln\Gamma$,
followed by the same Lambert-$W$ substitution as above) gives
\begin{equation}
M_{\max}+1 \sim \frac{\ln(1/\epsilon)}{W_0\!\left(\dfrac{\ln(1/\epsilon)}{2\Lambda T\,e}\right)}\,.
\label{eq:Mmaxasym}
\end{equation}
Since $N^*=1$, then $\mathrm{C}^*=M^*\eta^{M^*-1}$, so
\begin{equation}
\ln \mathrm{C}^*(\epsilon,\Lambda T) \;\sim\; \ln\eta\cdot\frac{\ln(1/\epsilon)}{W_0\!\left(\dfrac{\ln(1/\epsilon)}{2\Lambda T\,e}\right)}\,,
\label{eq:eps0scaling}
\end{equation}
i.e.\ $\mathrm{C}^*$ grows quasi-polynomially in $1/\epsilon$: slower than any fixed power of
$1/\epsilon$, but faster than any fixed power of $\ln(1/\epsilon)$.

\subsubsection{Limit $\Lambda T\to\infty$ at fixed $\epsilon$}

Here the mechanism differs. For fixed $M\ge2$, \eqref{eq:accEq} forces
$N(M)\sim(\Lambda T)^{1+1/M}$ as $\Lambda T\to\infty$, so it is natural to ask first
whether the interior stationary point of Proposition~\ref{prop:exact} -- the pair
$(M_c,N_c)$ obtained by solving the two Lagrange stationarity conditions for
$\mathrm{C}(N,M)=NM\,\eta^{N(M-1)}$ on the accuracy-constraint surface, treating $M$
and $N$ as continuous variables -- remains feasible in this regime. It does: solving
those conditions asymptotically gives $M_c\sim a\,(\Lambda T)^{2/3}$ and
$N_c\sim b\,(\Lambda T)^{1/3}$, for constants $a,b$ depending only on $\eta$, and both
lie comfortably inside the admissible range $(1,M_{\max})$, since $M_{\max}\sim
d\,\Lambda T$ grows linearly and so outpaces $M_c$ for large $\Lambda T$.
Feasibility alone does not decide the optimum, however, because the three candidates
that Proposition~\ref{prop:exact} instructs us to compare -- the interior point
$(M_c,N_c)$ and the two endpoints $M=1$ and $M=M_{\max}$ -- scale very differently
once their costs are worked out. At the interior point, both the product $N_cM_c$ and
the exponent $N_c(M_c-1)$ entering $\mathrm{C}$ grow, to leading order, as
$(2/W_0(1/e))\,\Lambda T$: linearly in $\Lambda T$, so that
$\mathrm{C}(N_c,M_c)\sim\eta^{(2/W_0(1/e))\Lambda T}$ is exponential in $\Lambda T$
whenever $\eta>1$. The endpoint $M=M_{\max}$ fares no better, and for the same
reason: since $N(M_{\max})=1$ by definition of $M_{\max}$, the cost there is
$\mathrm{C}(1,M_{\max})=M_{\max}\,\eta^{M_{\max}-1}\sim\eta^{d\,\Lambda T}$, again
exponential in $\Lambda T$. Only at the remaining endpoint, $M=1$, does the
exponential factor disappear entirely: the exponent $N(M-1)$ vanishes identically
at $M=1$, so $\mathrm{C}(N(1),1)=N(1)$ outright, and using $W_0(x)\sim x$ for small
$x$ in \eqref{eq:accEq},
\begin{equation}
N(1)=\frac{2\Lambda T}{W_0(\epsilon/\Lambda T)}\sim\frac{2(\Lambda T)^2}{\epsilon}\,,
\label{eq:N1asym}
\end{equation}
which is purely polynomial in $\Lambda T$, with no exponential factor at all.
Since a polynomial remains smaller than an exponential for $\Lambda T$ large enough, then for any fixed $\eta>1$ there is a crossover scale
$\Lambda T_c(\epsilon,\eta)$ beyond which $\mathrm{C}(N(1),1)$ is the smallest of the
three candidates, so that for all $\Lambda T\gg\Lambda T_c$,
\begin{equation}
M^*=1\,,\qquad N^*=N(1)\sim\frac{2(\Lambda T)^2}{\epsilon}\,,\qquad
\mathrm{C}^*(\Lambda T)\sim\frac{2(\Lambda T)^2}{\epsilon}\,,
\label{eq:Tinfresult}
\end{equation}
so that the optimal cost itself is polynomial -- quadratic -- in $\Lambda T$.

The degenerate case $\eta=1$ is qualitatively different, and it is the one case in
which the interior point genuinely wins. When $\eta=1$, the exponential factor in
$\mathrm{C}$ is identically $1$, so $\mathrm{C}(N,M)=NM$ exactly and there is no
exponential term at any $M$ to compete with; the interior stationary point is then
the true optimum, with $M^*\sim a\,(\Lambda T)^{2/3}$, $N^*\sim b\,(\Lambda T)^{1/3}$,
and $\mathrm{C}^*=N^*M^*\sim(2/W_0(1/e))\,\Lambda T$ -- linear in $\Lambda T$, and
therefore cheaper than the polynomial-but-quadratic result of \eqref{eq:Tinfresult}.
The two limits $\eta\to1^+$ and $\Lambda T\to\infty$ thus do not commute: for any
fixed $\eta>1$, however close to $1$, taking $\Lambda T\to\infty$ first always drives
the optimum to $M^*=1$, never to the interior point that is optimal exactly at
$\eta=1$.

\end{document}